\newtheorem{theorem}{Theorem}[section]
\newtheorem{lemma}[theorem]{Lemma}
\newtheorem{definition}[theorem]{Definition}
\newcommand{\C}{{\cal C}}
\newcommand{\T}{{\cal T}}
\begin{document}

\title{A $2^{O(k)}n$ algorithm for $k$-cycle in minor-closed graph families}

\author{
Raphael Yuster
\thanks{Department of Mathematics, University of Haifa, Haifa
31905, Israel. Email: raphy@math.haifa.ac.il}
}

\date{}

\maketitle

\setcounter{page}{1}

\begin{abstract}
Let $\C$ be a proper minor-closed family of graphs.
We present a randomized algorithm that given a graph $G \in \C$ with $n$ vertices, finds a simple cycle of size $k$ in $G$ (if exists) in $2^{O(k)}n$ time.
The algorithm applies to both directed and undirected graphs. In previous linear time algorithms for this problem,
the runtime dependence on $k$ is super-exponential.
The algorithm can be derandomized yielding a $2^{O(k)}n\log n$ time algorithm.

\vspace*{3mm}
\noindent
{\bf Keywords:} $k$-cycle; minor-closed graph family; parameterized algorithm; linear time algorithm

\end{abstract}

\section{Introduction}

All graphs in this paper are finite and simple. Standard graph-theoretic terminology follows \cite{bollobas-1978}.
Let $G$ be an undirected graph. A graph $H$ is a {\em minor} of $G$ if it can be obtained from $G$ by removal and contraction of edges. A family $\C$ of graphs is said to be {\em minor-closed} if a minor of a graph of the family is also a member of the family. The seminal graph minor theorem of Robertson and Seymour \cite{RS-2004} asserts that every minor-closed family of graphs can be characterized by a finite set of forbidden minors.

An undirected graph $G$ is {\em $d$-degenerate} if there is a total order $\pi$ of $V(G)$ such
that for each $v \in V(G)$, the number of neighbors of $v$ appearing in $\pi$ after $v$ is at most $d$.
The ordering $\pi$ is called a {\em $d$-degenerate ordering}.
Clearly, if $G$ is $d$-degenerate then $m\le dn$ where $n=|V(G)|$ and $m=|E(G)|$.
The smallest $d$ for which $G$ is $d$-degenerate is the {\em degeneracy} of $G$, denoted by $d(G)$.

It is well-known (and also an easy consequence of the graph minor theorem) that if $\C$ is a proper minor-closed family of graphs, i.e., a minor-closed family which is not the family of all graphs, then all graphs in $\C$ are of bounded
degeneracy. So, there exists a constant $d=d_\C$ such that every $G \in \C$ satisfies $d(G)\le d$.
In particular, all graphs in $\C$ are sparse, as they have only $O(n)$ edges.
As an example, consider the family of {\em planar graphs}. It is minor-closed and the degeneracy
of every planar graph is at most $5$ (as each planar graph has a vertex whose degree is at most $5$).

Relying on the fact that proper minor-closed graph families have bounded degeneracy, Alon et al. \cite{AYZ-1995} used the color coding method to devise a randomized linear (in $n$) algorithm  for finding simple cycles of size $k$ in directed or undirected graphs that belong\footnote{Throughout this paper, if a graph is directed, then the concepts of minor and degeneracy refer to its underlying undirected structure.} to a proper minor-closed family of graphs.
The running time of their randomized algorithm is $k^{O(k)}n$ and it can be derandomized resulting in
an $k^{O(k)}n\log n$ deterministic algorithm. An important ingredient in the proof of \cite{AYZ-1995} is to color the vertices of the graph such that a simple cycle of size $k$ will, with small probability, be colored by {\em consecutive} distinct colors. The chance of that occurring is already at most $2/k^{k-1}$, which already implies that the algorithm's dependence on $k$ is $k^{O(k)}$. To the best of our knowledge, no faster algorithm is known. Here
we present a faster linear time parameterized algorithm as the dependence on $k$ is only exponential.
We note that the exponent base is only linear in $d=d_\C$, as in \cite{AYZ-1995}.
\begin{theorem}\label{t:1}
	Let $\C$ be proper minor closed family of graphs. There is a randomized algorithm that given an $n$-vertex directed or undirected graph $G \in \C$, finds a simple directed or undirected cycle of size $k$ in $G$, if such a cycle exists, in $2^{O(k)}n$ time.
	The algorithm can be derandomized resulting in running time $2^{O(k)}n\log n$.
\end{theorem}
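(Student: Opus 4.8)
\medskip
\noindent{\it Proof plan.}

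The plan is to run color coding, but — departing from \cite{AYZ-1995} — demanding only that a $k$-cycle become \emph{colorful} (its $k$ vertices receive pairwise distinct colors), never that the colors appear in a prescribed cyclic order; this is exactly what improves the dependence on $k$ from $k^{\Theta(k)}$ to $2^{\Theta(k)}$. A uniformly random coloring $c\colon V(G)\to[k]$ makes a fixed $k$-cycle colorful with probability $k!/k^k\ge e^{-k}$, so it suffices to give a deterministic procedure that, given $G\in\C$ and a $k$-coloring, decides whether $G$ has a colorful $k$-cycle and, if so, outputs one, in $2^{O(k)}n$ time: running it on $2^{O(k)}$ independent random colorings yields the randomized algorithm, and feeding it an explicit family of perfect hash functions from $[n]$ onto $[k]$ of size $2^{O(k)}\log n$, as in \cite{AYZ-1995}, yields the deterministic $2^{O(k)}n\log n$ bound. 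For directed input the same reduction is applied to directed $k$-cycles, with degeneracy, the degenerate ordering and the orientation below all referring to the underlying undirected graph.

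Fix the constant $d=d_\C$, compute in $O(n)$ time a $d$-degenerate ordering $\pi$ of $G$, and orient every edge from its $\pi$-smaller to its $\pi$-larger endpoint, so each vertex $v$ has out-degree $|N^+(v)|\le d$. Let $C$ be a colorful $k$-cycle and $v$ its $\pi$-minimum vertex; the two neighbors $a,b$ of $v$ on $C$ are then $\pi$-larger, so $a\neq b$ lie in $N^+(v)$, and deleting $v$ leaves a colorful path on $k-1$ vertices from $a$ to $b$ with color set exactly $[k]\setminus\{c(v)\}$ (so it avoids $v$). Conversely any such configuration closes up to a colorful $k$-cycle. Hence it suffices to detect a vertex $v$ and a pair $a\neq b$ in $N^+(v)$ joined by a colorful path from $a$ to $b$ whose color set is $[k]\setminus\{c(v)\}$; over all $v$ there are only $\sum_v\binom{|N^+(v)|}{2}=O(d^2 n)=O(n)$ such triples, and the sought path has a prescribed ordered pair of endpoints and a prescribed color set.

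The remaining — and crux — step is to answer all $O(n)$ of these prescribed-endpoint colorful-path queries within one near-linear computation: a separate subset dynamic program per source costs $2^{O(k)}n$ each, i.e.\ $2^{O(k)}n^2$ in total, which is too slow, and this is the only place where bounded degeneracy (rather than mere sparsity) is really needed, since a colorful $k$-cycle in a general sparse graph need not contain any bounded-degree vertex to pivot on. I would group the queries by the omitted color $j=c(v)$ (only $k$ groups) and, for each $j$, run a single subset dynamic program over the graph $G_j$ obtained by deleting the color-$j$ vertices: for every $S\subseteq[k]\setminus\{j\}$ and every vertex $y$ it maintains whether a colorful $S$-path ends at $y$, carrying alongside it exactly enough information to recognize, at the closing step, that a full-length such path has its two ends in the common out-neighborhood of some color-$j$ vertex $v$ — a test that, because $|N^+(v)|\le d$, ranges over only $O(d)$ candidates and can be charged to the $O(n)$ directed edges leaving color-$j$ vertices, while each table update is charged to an out-edge. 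Thus each group costs $2^{O(k)}n$ and the total is $2^{O(k)}n$; a colorful $k$-cycle, if one exists, is then reconstructed from back-pointers in additional linear time. The hardest point is to carry out this last piece of bookkeeping so that neither a spurious factor of $n$ nor the $k^{\Theta(k)}$ of \cite{AYZ-1995} reappears; bounded degeneracy enters twice — to bound the number of queries and to bound the number of candidates closing each query — and is precisely what makes the single global dynamic program per color suffice.
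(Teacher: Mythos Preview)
Your proposal has a real gap exactly where you yourself label it ``the hardest point.'' The table you describe, $\mathrm{DP}(S,y)=$ ``some colorful $S$-path ends at $y$,'' has already forgotten where that path began. Closing a cycle requires the two ends to lie in $N^+(v)$ for the \emph{same} pivot $v$, and you offer no mechanism for this beyond the promise to ``carry alongside exactly enough information.'' No such auxiliary state is evident: recording the start vertex costs a factor of $n$; recording only its color is useless (many vertices share it); recording its position in $[d]$ inside some $N^+(v)$ is ambiguous because one vertex $a$ can lie in $N^+(v)$ for many different color-$j$ vertices $v$, with different positions. Running a separate rooted DP per candidate start is precisely the $2^{O(k)}n^2$ algorithm the paper already cites as known. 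Bounded degeneracy does bound the number of triples $(v,a,b)$ and the size of each $N^+(v)$, but it does not let a single global DP answer $\Theta(n)$ rooted-path queries without tracking the root. This endpoint-coordination barrier for colorful \emph{cycles} (as opposed to paths) is the entire difficulty, not a bookkeeping afterthought, and your plan does not cross it.

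The paper's route is entirely different and is engineered so that two free endpoints never have to be matched. It colors with a \emph{constant} palette and asks the $k$-cycle to be periodically colored with period $h\in\{3,4,5,6\}$ (probability $h^{-k}=2^{-O(k)}$). Using the $d$-degenerate ordering together with a random winner/loser partition and a random $[d]$-edge-labeling, it arranges that the $\Theta(k/h)$ cycle edges sitting on one fixed consecutive color pair survive a cleanup and span vertex-disjoint stars; contracting all of these stars in a single $O(n)$ pass shrinks the target cycle length by a factor $(h{-}1)/h$ while the graph stays in $\C$. When $h$ would drop to $3$ (where contraction becomes unsound) a random color-refinement lifts the period back to $6$. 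Iterating along a fixed ``4--3--6--5'' schedule reaches length $3$ after $O(\log k)$ rounds; round $i$ succeeds with probability $2^{-O(t_i)}$ on the current length $t_i$, and since the $t_i$ decrease geometrically the exponents sum to $O(k)$. Every step is thus local to one consecutive color pair and never needs to coordinate a start with an end --- which is exactly the obstacle your approach leaves open.
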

Following a review of related work, in Section 2 we give a high level overview of our algorithm.
Following that, in Section 3 we  set up some definitions and lemmas that are used in the randomized algorithm presented in Section 4. Derandomization is discussed in Section 4.

\subsection{Related work}
We have already mentioned the $k^{O(k)}n$ randomized algorithm of \cite{AYZ-1995} and its deterministic version (also appearing in \cite{AYZ-1995}) that incurs an additional $\log n$ factor. If linear time is not sought, then randomized color coding can find
cycles of size $k$ in any graph with $n$ vertices and $m$ edges in $2^{O(k)}mn$ time, so for proper
minor-closed graph families the runtime is $2^{O(k)}n^2$.
For the special case of planar graphs, Eppstein \cite{eppstein-1999} devised a deterministic algorithm that runs in $k^{O(k)}n$ time. In fact, Eppstein's algorithm applies to finding any pattern with $k$ vertices
(namely, subgraph isomorphism of planar graphs), not just cycles. Dorn \cite{dorn-2010} improved the dependence on $k$ for the subgraph isomorphism problem in planar graphs to $2^{O(k)}n$.
For {\em undirected} planar graphs as well as undirected apex-minor free graphs (apex graphs are graphs that can be made planar by a single vertex removal), a recent algorithm of Fomin et al. \cite{FLM+-2016} can find a cycle of size $k$ in $2^{O(\sqrt{k}\log^2 k)} \cdot n^{O(1)}$.
So here the dependence on $k$ is sub-exponential but the dependence on $n$ is not linear.
Their method relies on topological properties of planar and apex-minor free graphs (in particular, that they have locally bounded treewidth) and this property does not hold for general minor-closed families \cite{DH-2004,eppstein-2000}. Pilipczuk and Siebertz \cite{PS-2019} solve undirected subgraph isomorphism in proper minor-closed graph families in $k^{O(k)}n^{O(1)}$ using only $n^{O(1)}$ space.

The related easier problem of finding a simple $k$-path has faster algorithms.
In fact, color coding can solve the $k$-path problem in general graphs using a $2^{O(k)}m$ randomized algorithm, and improvements in the exponent, and even deterministic algorithms running in $2^{O(k)}m$ time are known.
Notable and representative results that use advanced combinatorial and algebraic techniques in
order to improve the constant in the exponent are \cite{BHKK-2017,CKL+-2009,FLPS-2016,zehavi-2015}.
 Dorn et al. \cite{DPBF-2010} solve undirected $k$-path in planar graphs in $2^{O(\sqrt{k})}n^{O(1)}$ time. Their method can also be used to answer the question ``is there a simple cycle of size {\em at least $k$}'' in the same time. Building on the theory of bidimensionality of
Demaine et al. \cite{DFHT-2005} and on the Robertson-Seymour graph minor theory, Dorn et al. \cite{DFT-2012} solve undirected $k$-path in proper minor-closed families in $2^{O(\sqrt{k})}n^{O(1)}$ time.
Although bidimensionality applies only to undirected graphs, Dorn et al. \cite{DFLRS-2013} overcome some of the obstacles encountered in the directed setting and achieve an almost sub-exponential algorithm for directed $k$-path in planar graphs and apex minor-free graphs, solving it in time $O((1+\epsilon)^k n^{f(\epsilon)}$ for any $\epsilon > 0$.

\section{Algorithm outline}

We first describe the algorithm from \cite{AYZ-1995} (hereafter ``old'' algorithm) and then outline the new algorithm and how it overcomes the obstacles encountered when trying to improve the runtime of the old algorithm.
This outline also serves as a roadmap and motivation for the definitions and steps in Section 3 and 4 that formalize
the notions discussed in the present section.

\subsection{Outline of the $k^{O(k)}n$ algorithm of \cite{AYZ-1995}}

Throughout this section we fix a proper minor closed family $\C$ and let $d=d_\C$ to be the degeneracy bound of $\C$. 
Since the input graph $G \in \C$ is $d$-degenerate, we can find in $O(dn)=O(n)$ time a {\em $d$-degenerate ordering} of its vertices (see Lemma \ref{l:construct-pi}). By this we mean a total order $\pi$ of $V(G)$ such  that for each vertex $v$, the number of neighbors of $v$ appearing after $v$ in $\pi$ is at most $d$.
Having found and fixed such an ordering $\pi$, we can label the edges of $G$ with the integers $\{1,\ldots,d\}$ such
that for each vertex $v$, the edges incident with $v$ that connect $v$ to vertices appearing after $v$ in $\pi$ have {\em distinct} labels.  We call such a labeling a {\em $d$-degenerate labeling} (see Definition \ref{d:degenerate-labeling}).

Now color the vertices of $G$ with $k$ colors, say the colors $\{0,\ldots,k-1\}$.
A (simple) $k$-cycle in $G$ is {\em well-colored} if its vertices are {\em consecutively} colored with
the colors $\{0,\ldots,k-1\}$. Observe that with probability $k^{-O(k)}$, a simple cycle $k$-cycle $C$ present in $G$ will be well-colored. Our goal is therefore to detect a well-colored cycle.
Let $G'$ be the spanning subgraph of $G$ obtained by keeping only edges
connecting consecutively colored vertices. This is done modulo $k$, so we also keep edges connecting a vertex colored $0$
and a vertex colored $k-1$. Notice that if $C$ is well-colored, then $C$ is also a cycle in $G'$.
This cleanup step of removing ``uninteresting'' edges also appears in the new algorithm (the cleanup step of
Section 4).

Suppose $C=(v_0,\ldots,v_{k-1})$ where $v_i$ has color $i$ for $i=0,\ldots,k-1$.
It could be that $v_0$ is located before $v_{k-1}$ in $\pi$, or after $v_{k-1}$ in $\pi$.
Furthermore, the edge connecting $v_{k-1}$ and $v_0$ has one of $d$ possible labels assigned by our $d$-degenerate labeling. We guess the label of this edge (there are $d$ choices)
and we guess whether $v_{k-1}$ is before $v_0$ or after $v_0$ in $\pi$ (there are two choices). We thus have a chance of $1/(2d)$ of correctly guessing the label and order. Suppose that we indeed guessed correctly and assume without loss of generality that the label is $1$ and that $v_{k-1}$ is before $v_0$.

We remove from $G'$ all edges connecting a vertex colored $k-1$ and a vertex colored $0$ {\em unless}
it is of the form $xy$ where $x$ is before $y$ in $\pi$, $x$ has color $k-1$, $y$ has color $0$ and the label of
the edges is $1$. Denoting the resulting graph by $G''$, we observe that $C$ is still a cycle in $G''$
since the cycle edge connecting $v_{k-1}$ and $v_0$ has not been removed.
An important observation is that the subgraph of $G''$ induced by the vertices colored $k-1$ and the vertices colored $0$ is a forest of rooted stars where the roots are the vertices colored $0$ and the leaves are the vertices colored $k-1$
(some of the stars may be trivial, namely isolated vertices). We next contract each such star into a single vertex and assign the color $0$ to the unified vertex of each star (so there is no longer color $k-1$ present).
The new contracted graph $G^*$ is still in $\C$, as $\C$ is minor closed. The crucial argument now is that 
there exists a well-colored cycle of size $k-1$ in $G^*$ {\em if and only if} there exists a
well-colored cycle of size $k$ in $G''$ because the only way an edge whose endpoints are colored $k-2$ and $0$
exists in $G^*$ is through a contraction of a star as above. We can now continue inductively with $G^*$
where we again compute a $d$-degnerate ordering, a $d$-degenerate labeling, and guess the order and label of the
``last'' cycle edge. The recursion bottoms when $k=3$ where we use a known $O(nd)$ algorithm to find a $C_3$ in a $d$-degenerate graph. The probability of success, namely that a $k$-cycle in $G$ survives all the way through the recursive applications is $k^{-O(k)} \cdot (1/(2d))^{k-3}=k^{-O(k)}$ hence the $k^{O(k)}$ factor in the old algorithm's runtime.

\subsection{Outline of the new $2^{O(k)}n$ algorithm}

The first obstacle when trying to improve the $k^{O(k)}n$ runtime of the old algorithm is already in the initial coloring step,
when we require the sought cycle to be well-colored, as this only occurs with probability $k^{-\Theta(k)}$.
One may try to weaken the well-colored requirement as follows. Just color the vertices with, say, the $h$ colors
$\{0,\ldots,h-1\}$ ($h$ being some small constant independent of $k$; we will see later why we must
sometimes have $h$ as large as $6$) and hope that the coloring of the $k$-cycle is {\em periodic}. That is, the colors are consecutively colored modulo $h$. Of course, for this to hold we must assume that $h$ divides $k$, so for the time being, assume for simplicity this is indeed the case, namely $k=hq$ for some integer $q$ (see Definition \ref{d:hr-coloring} of an $(h,r)$-cyclic coloring which assumes the more general case when $k=hq+r$ where
$r = k \pmod h$ is the remainder). The probability of a cycle to be periodically colored is now at least $h^{-k}=2^{-O(k)}$.

Now, instead of contracting just the ``last'' edge connecting $v_{k-1}$ and $v_0$ as we have done in the old  algorithm (recall - after guessing whether $v_{k-1}$ is before or after $v_0$ in the total order and after guessing the label of the edge connecting them) we can now contract all edges connecting vertices colored $h-1$ and vertices colored $0$, so we can contract many vertices of our periodically-colored cycle at once; not just one, but $k/h=q$.
But here we arrive at a new obstacle. There are $q$ pairs of vertices on our cycle $C=(v_0,\ldots,v_{k-1})$ having colors $h-1$ and $h$.
These are $v_{jh-1}v_{jh}$ for $j=1,\ldots,q$ (here $v_0=v_k$). But for each of these $q$ pairs, we need to guess whether $v_{jh-1}$ is before or after $v_{jh}$ in the total order, and we need to guess the label of the
edge connecting $v_{jh-1}$ and $v_{jh}$.

Let us consider label-guessing first. We would like all of the $q$ labels of the $q$ pairs to be the same (say, all of them to have the label $1$), as recall that when we do the second cleanup step (analogous to going from $G'$ to $G''$ in the old algorithm) we want to keep only edges with the same label connecting vertices colored $h-1$ and vertices colored $0$. But this is easy to achieve with $2^{-O(k)}$ probability as follows. Recall that when we construct a $d$-degenerate labeling, all we need is to assign, for each vertex $v$, distinct labels to edges connecting $v$ to (the at most $d$) vertices after it in the total order. So, instead of just assigning the distict labels arbitrarily, assign the distinct labels at random (for example, choose a random permutation of the neighbors of
$v$ appearing after $v$ in the total order and assign labels in the order dictated by the permuation). Hence, for any given edge, the probability that its label is $1$ is at least $1/d$ and the events of having label $1$ are independent for disjoint edges. So, the probability that all the $q$ edges of our $q$ pairs have the label $1$ is at least $d^{-q}=2^{-O(k)}$. This is the ``degenerate labeling step'' of Section 4.

Let us next consider order-guessing. For each vertex colored $0$ or $h-1$ we can flip a coin whether it is a ``winner'' or a ``loser''. We would hope that for each pair, $v_{jh-1}v_{jh}$, one of them is a winner, one of them is a loser, and the winner is before the loser in the total order (see Definition \ref{d:wl} of winner-loser partition). Of course, the probability that this happens is $(1/4)^q = 2^{-O(k)}$.

Once we do the cleanup (analogous to going from $G'$ to $G''$ in the old algorithm - called now the ``winner-loser cleanup step'' in Section 4) and the contraction of the vertex-disjoint stars as in the old algorithm (this is the ``contraction step'' in Section 4), we would like to claim, just as in the old algorithm, that the new graph $G^*$ has a cycle of length $k-q=(h-1)q$ if and only if the graph $G''$ before the contraction has a periodically-colored $C_k$. This certainly holds if $h \ge 4$ but this fails if $h=3$. Indeed, if $h=3$ then in $G^*$ there are cycles
with periodic coloring of $h-1=2$ colors, namely colors $\{0,1\}$, of length $2q$. But such cycles might {\em not} necessarily correspond to cycles of length $3q$ before the contraction. Indeed prior to contraction, in $G''$ vertices colored $0$ may be adjacent to vertices colored $1$ (as these are consecutive colors), hence such cycles might be {\em original} cycles of length $2q$ or just cycles where only part of the pairs are contracted, not all $q$ of them. If $h \ge 4$ this problem does not occur since in $G''$ there are no edges connecting color $0$ to color $h-2$ (as they are not consecutive colors). This means that if $h=3$ we {\em cannot} perform the contraction step.

So, one might be content with just starting with some constant $h \ge 4$, but recall that after each contraction step, we arrive at a graph where color $h$ no longer appears, and $h$-periodic cycles become $(h-1)$-periodic. So
if we start with any $h \ge 4$, already after $h-3$ rounds, we inevitably have to deal with the case $h=3$,
so now we have a new obstacle to handle as we cannot do contraction. To solve this problem we do the {\em color refinement step} of Section 4. To explain this, let $t$ be the current cycle length we are looking at (this is $t$ from Definition \ref{d:465} and Table \ref{table:1}).
In the beginning it was $k=t$, after the first contraction round it was $t=k-q=k-k/h=k(h-1)/h$, then $t=k(h-2)/h$ and so on until we arrive at the present $t$ and the case of $3$-period colorings.
So each vertex in our present graph has colors in $\{0,1,2\}$ and we are interested in detecting $3$-periodic colored cycles of size $t$, so in particular $t=0 \bmod 3$ at this point.
Assume for simplicity that $t= 0 \bmod 6$ (the case $t = 3 \pmod 6$ is similarly handled in the color refinement step of Section 4).
Each vertex of color $j \in \{0,1,2\}$ will keep its color with probability $1/2$ and change its color
to $j+3$ with probability $1/2$. So now the possible colors are $\{0,1,2,3,4,5\}$. Observe that
a $3$-periodic colored cycle of size $t$ now has a chance of at least $(1/2)^t=2^{-O(k)}$ of becoming
$6$-periodic, so after this recoloring we need to detect $6$-periodic colored cycles, namely the case $h=6$
(so at a price of probailiby $(1/2)^t$ we are again at a stage where we can do contractions).

The description above also explains why we must assume in our algorithm that $h \in \{3,4,5,6\}$
and the value of $h$ cycles through these numbers as follows: if, say, we start with $h=4$ (this is an arbitrary decision) then the next will follow from contraction leading to $h=3$, then we have to do color refienment leading to $h=6$, then contraction
leading to $h=5$, then contraction leading to $h=4$ and so on. This explains the motivation of
Definition \ref{d:465} (a 4-3-6-5 sequence).

The outline above gives a high level description of the various definitions and steps appearing in Sections 3 and 4, but assumes the idealized case where the current cycle size $t$ we are looking at divides the current $h$ we are looking at. This occurs, for example if $k=2^r$ and we start with $h=4$ as we will never have any divisibility issues to deal with.
For example, if initially $(h,k)=(4,32)$ then our following step will be contraction resulting in
$h=3$ and $t=24$, namely case $(3,24)$, then we do color refinement and go to $(6,24)$, then
contraction to $(5,20)$, then contraction to $(4,16)$, then contraction to $(3,12)$, color refinement to
$(6,12)$, then $(5,10)$, $(4,8)$, $(3,6)$, $(6,6)$, $(5,5)$, $(4,4)$ $(3,3)$ and once we arrive at this point, as in the old algorithm, the recursion bottoms and  we use a known $O(nd)$ algorithm to find a $C_3$ in a $d$-degenerate graph. If, however divisibility issues arise during this sequence, there are some technical issues to handle which motivate the more general notion of {\em type} in Definition \ref{d:465}
and Table \ref{table:1}.

Finally observe that all success probabilities that we assume throughout the algorithm are at least
$2^{-O(k)}$ and that the recursion depth is only $O(\log k)$ (this is Lemma \ref{l:prop}).
	
\section{Cyclic colorings of cycles}

Throughout the rest of this paper we fix a proper minor closed family $\C$ and fix $d=d_\C$ to be
the degeneracy bound of $\C$. Recall from the introduction that a $d$-degenerate graph has a $d$-degenerate ordering, namely a total order of its vertex set where for each vertex $v$, at most $d$ of the neighbors of $v$ appear after $v$ in the ordering.
A simple linear time algorithm that constructs a $d$-degenerate ordering $\pi$ of a $d$-degenerate graph
is well-known (see, e.g., \cite{MB-1983}). Together with the  fact that $|E(G)|\le dn$ for $G \in \C$ we obtain the following lemma.
\begin{lemma}\label{l:construct-pi}
	Let $G \in \C$ have $n$ vertices. A $d$-degenerate ordering of $G$ can be obtained in $O(dn)=O(n)$ time. \qed
\end{lemma}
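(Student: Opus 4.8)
\emph{Proof plan.} The plan is to use the classical minimum-degree ``peeling'' construction, implemented with a bucket queue so that it runs in linear time. Recall that $G\in\C$ and $d=d_\C$, so $G$ is $d$-degenerate. I first record the elementary fact that then \emph{every} subgraph $H$ of $G$ has a vertex of degree at most $d$: restricting a $d$-degenerate ordering of $G$ to $V(H)$ yields a $d$-degenerate ordering of $H$, and the first vertex of $H$ in that ordering has all of its (at most $d$) $H$-neighbors after it, so its degree in $H$ is at most $d$. With this in hand, the ordering $\pi$ is built greedily from the front: while the current graph is nonempty, pick a vertex $u$ of minimum degree in it, append $u$ to $\pi$, and delete $u$. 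Correctness is immediate: when $u$ is chosen, the vertices still present are exactly the vertices that will follow $u$ in $\pi$, so the number of neighbors of $u$ appearing after $u$ in $\pi$ is precisely the current degree of $u$, which is at most $d$ by the choice of $u$ and the fact just noted.

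It remains to implement this in $O(dn)$ time, since a naive rescan for a minimum-degree vertex costs $\Theta(n)$ per step. I would use the bucket-queue structure of Matula and Beck~\cite{MB-1983}: keep buckets $B_0,B_1,\dots,B_{n-1}$, where $B_\delta$ is a doubly linked list of the vertices whose \emph{current} degree is $\delta$, together with an array storing each vertex's current degree and a pointer to its cell inside its bucket, so that moving a vertex between buckets costs $O(1)$. Also maintain an index $p$ pointing to a lower bound for the smallest nonempty bucket. A single peeling step is: advance $p$ upward, skipping empty buckets, until $B_p\ne\emptyset$; remove a vertex $u$ from $B_p$ and append it to $\pi$; and, for each neighbor $w$ of $u$ still present, decrement the stored degree of $w$, move $w$ from $B_{\deg(w)+1}$ to $B_{\deg(w)}$, and set $p\leftarrow\min(p,\deg(w))$.

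The one point needing care is the amortized cost of maintaining $p$. Each neighbor-decrement lowers $p$ by at most $1$: when $u$ is extracted, all present vertices, hence all neighbors $w$ of $u$, have current degree at least $p$, so after the decrement $w$ still has degree at least $p-1$. Since each edge triggers at most one such decrement over the whole run, $p$ decreases by a total of at most $m=|E(G)|$; as $p$ also never exceeds $n$, the total amount by which $p$ is advanced upward over all the skip-loops is $O(n+m)$. Every remaining operation is $O(1)$ and is executed $O(n+m)$ times in total (once per vertex deletion and once per edge). Because $G\in\C$ is $d$-degenerate, $m\le dn$, and the total running time is $O(n+m)=O(dn)=O(n)$, as claimed. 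I expect this amortized accounting for $p$ to be the only non-routine part; the rest is bookkeeping.
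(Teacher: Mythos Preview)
Your proof is correct and follows exactly the approach the paper invokes: the paper does not give a proof at all but simply cites the linear-time peeling algorithm of Matula and Beck~\cite{MB-1983} together with the bound $|E(G)|\le dn$, and your argument is a faithful and careful expansion of that citation. There is nothing to add or correct.
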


It will be useful to label the edges of a graph $G \in \C$ with at most $d$ integer labels as in the following definition. 
\begin{definition}[degenerate labeling]\label{d:degenerate-labeling}
	Let $G \in \C$ and suppose that $\pi$ is a $d$-degenerate ordering of $G$.
	A labeling of $E(G)$ with $[d]=\{1,\ldots,d\}$ is a {\em $d$-degenerate labeling} if for each $v \in V(G)$, the edges
	incident with $v$ connecting it to vertices after $v$ in $\pi$ have distinct labels.
\end{definition}
Observe that given a $d$-degenerate ordering $\pi$ we can easily construct a $d$-degenerate labeling
in linear time. However, as explained in Section 2, it will be useful to assign the labels of the edges connecting $v$ to its neighbors appearing after it in $\pi$ at random (see the ``degenerate labeling step'' in Section 4).

In what follows, we always assume that our colorings are vertex colorings and the colors are taken from $Z_q=\{0,\ldots,q-1\}$ for some $q \ge 2$. For integers $a,b$ with $b \ge 2$, whenever we use
the operator $a \pmod b$, its result is the unique integer $0 \le r \le b-1$ such that $b\,|\,(a-r)$.

As explained in Section 2, given a vertex coloring, it will be important to look for simple cycles whose colorings are almost periodic, in the sense that apart from some small (or even empty) set of consecutive cycle vertices, the remaining vertices are colored periodically. We therefore require the following definition.
\begin{definition}[$(h,r)$-cyclic coloring]\label{d:hr-coloring}
	Let $h > r \ge 0$ be integers where $h \ge 3$.
	Suppose $C=(v_0,\ldots,v_{k-1})$ is a simple cycle where $k \pmod h = r$
	and that the cycle vertices are colored with $Z_{h+r}$.
	The coloring of $C$ is called {\em $(h,r)$-cyclic} if
	$c(v_i)=i \pmod h$ for~ $0 \le i < k-r$ and $c(v_{k-i})=h+r-i$ for $1 \le i \le r$.
\end{definition}
So, in an $(h,r)$-cyclic coloring of a simple cycle, the prefix of $k-r$ vertices of the cycle
is periodically colored with $0,\ldots,h-1$, and the remaining suffix of $r$ vertices is
colored with the colors $h,h+1,\ldots,h+r-1$.
For example, in a $(5,3)$-cyclic coloring of a cycle of size $13$
the sequence of colors is $0123401234567$. In an $(h,0)$-cyclic coloring (namely, if $h$ divides $k$), the entire coloring
of the cycle is periodic.

As explained in Section 2, it will be useful to guess, for a pair of consecutive colors
$j-1$ and $j$, the order in $\pi$ of a pair $u,v$ of adjacent vertices that have these colors.
Unlike \cite{AYZ-1995}, we must allow some vertices colored by $j-1$ to appear in $\pi$ before some other vertices
colored by $j$, but also allow some vertices colored by $j$ to appear in $\pi$ before some other vertices
colored by $j-1$. To facilitate this, the following definition is required.
\begin{definition}[winner-loser partition]\label{d:wl}
	Let $G$ be a graph with a coloring $c:V(G) \rightarrow Z_{q}$ and suppose that $j-1$ and $j$
	are two consecutive integers in $Z_{q}$. A partition of $c^{-1}(j-1) \cup c^{-1}(j)$ into two parts
		$(W,L)$ is called a {\em winner-loser partition}. The vertices in $W$ are {\em winners} and the vertices in $L$ are {\em losers}.
\end{definition}

A crucial definition that is used in our algorithm is a certain decreasing integer sequence.
At each iteration of the algorithm, we will search for simple cycles whose size is the current element of the sequence and which have an $(h,r)$-cyclic coloring associated with the current element in the sequence.
\begin{definition}[4-3-6-5 sequence]\label{d:465}
For every $k \ge 4$, we define a decreasing sequence starting with $k$ and ending with $3$.
Each element of the sequence is of one of $13$ {\em types} where the set of possible types is
$\{(4,3),(4,2),(4,1),(4,0),(3,1),(3,0),(6,4),(6,3),(6,2),(6,1),(6,0),(5,1),(5,0)\}$.
The first element $k$ is of type $(4,k \pmod 4)$.
Suppose the current element is $t \ge 4$. Then the next element depends on $t$ and on its type,
as defined in Table \ref{table:1}.
We call this sequence a {\em 4-3-6-5 sequence}.
The 4-3-6-5 sequence starting with $k$ is denoted by $S(k)$, it's $j$'th element is $S(k,j)$ so
$S(k,1)=k$, and its number of elements is $N(k)$ so $S(k,N(k))=3$.
\end{definition}

\begin{table}[ht]
\begin{center}
\begin{tabular}{|c|c|c|c|}
	\hline 
	current element & current type & next element  & next type \\ 
	\hline 
	start & ~ & $k$ & $(4, k \pmod 4)$ \\ 
	\hline 
	$t$ & $(4,3)$ & $t-1$ & $(4,2)$ \\ 
	\hline 
	$t$ & $(4,2)$ & $t-1$ & $(4,1)$ \\ 
	\hline 
	$t$ & $(4,1)$ & $(3t+1)/4$ & $(3,1)$ \\ 
	\hline 
	$t$ & $(4,0)$ & $3t/4$ & $(3,0)$ \\ 
	\hline 
	$t > 4$ & $(3,1)$ & $t$ & $(6,t \pmod 6)$ \\ 
	\hline 
	$t > 4$ & $(3,0)$ & $t$ & $(6, t \pmod 6)$ \\ 
	\hline 
	$t$ & $(6,4)$ & $t-1$ & $(6,3)$ \\ 
	\hline 
	$t$ & $(6,3)$ & $t-1$ & $(6,2)$ \\ 
	\hline 
	$t$ & $(6,2)$ & $t-1$ & $(6,1)$ \\ 
	\hline 
	$t$ & $(6,1)$ & $(5t+1)/6$ & $(5,1)$ \\ 
	\hline 
	$t$ & $(6,0)$ & $5t/6$ & $(5,0)$ \\ 
	\hline 
	$t$ & $(5,1)$ & $(4t+1)/5$ & $(4,1)$ \\ 
	\hline 
	$t$ & $(5,0)$ & $4t/5$ & $(4,0)$ \\ 
	\hline 
	$4$ & $(3,1)$ & $3$ & $(3,0)$ \\ 
	\hline 
\end{tabular}
\end{center}
\caption{The definition of a 4-3-6-5 sequence.}\label{table:1}
\end{table} 

\noindent
Example: We list the elements of $S(307)$, the 4-3-6-5 sequence starting at $k=307$, together with the type of each element:
$307\,(4,3)\;$
$306\,(4,2)\;$
$305\,(4,1)\;$
$229\,(3,1)\;$
$229\,(6,1)\;$
$191\,(5,1)\;$
$153\,(4,1)\;$
$115\,(3,1)\;$
$115\,(6,1)\;$
$96\,(5,1)\;$
$77\,(4,1)\;$
$58\,(3,1)\;$
$58\,(6,4)\;$
$57\,(6,3)\;$
$56\,(6,2)\;$
$55\,(6,1)\;$
$46\,(5,1)\;$
$37\,(4,1)\;$
$28\,(3,1)\;$
$28\,(6,4)\;$
$27\,(6,3)\;$
$26\,(6,2)\;$
$25\,(6,1)\;$
$21\,(5,1)\;$
$17\,(4,1)\;$
$13\,(3,1)\;$
$13\,(6,1)\;$
$11\,(5,1)\;$
$9\,(4,1)\;$
$7\,(3,1)\;$
$7\,(6,1)\;$
$6\,(5,1)\;$
$5\,(4,1)\;$
$4\,(3,1)\;$
$3\,(3,0)\;$.
Observe that in this case we have that $N(307)=35$ and, for example, $S(307,8)=115$.

Several easy observations following directly from the definition of a 4-3-6-5 sequence are that the last element (namely, $3$) is always of type $(3,0)$, every element $t$ of type $(h,r)$ satisfies $t \pmod h = r$,
and types $(4,3),(4,2)$ are only possible in the beginning of the sequence.

Notice that we can partition all but the (at most) first two elements of a 4-3-6-5 sequence into consecutive
{\em segments}. Every element of type $(4,0)$ or $(4,1)$ is the first element of a segment.
So, for example, for the sequence $S(307)$ above, the segments listed in consecutive order are
$\{305,229,229,191\}$,
$\{153,115,115,96\}$,
$\{77,58,58,57,56,55,46\}$,
$\{37,28,28,27,26,25,21\}$,
$\{17,13,13,11\}$,
$\{9,7,7,6\}$,
$\{5,4,3\}$.
\begin{lemma}\label{l:prop}
	Each segment has at most $7$ elements and all but the last segment have at least four elements.
	Furthermore, each segment has at most one element of type $(4,\cdot)$, at most one element of type $(5,\cdot)$,
	at most one element of type $(3,\cdot)$ and at most four elements of type $(6,\cdot)$.
	If $t$ is the first element of some segment, then the first element of the next segment is at most $(t+1)/2$.
	In particular, the number of segments is at most $\lfloor \log_2 k \rfloor$, the first element of the $r$'th segment is at most $\lceil k/2^{r-1} \rceil$ and $N(k) \le 7\log_2 k$.
\end{lemma}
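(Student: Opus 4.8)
The plan is to determine, by a bounded walk through Table~\ref{table:1}, the complete list of possible \emph{shapes} of a segment, and then read every assertion of the lemma off that list together with one short induction. A segment starts at an element $s$ of type $(4,0)$ or $(4,1)$ (the types $(4,2),(4,3)$ occur only among the at most two elements that precede the first segment), so, since an element of type $(h,r)$ satisfies $s\equiv r\pmod h$, we have $s\equiv 0$ or $1\pmod 4$. Tracing Table~\ref{table:1} forward from $s$: the next element has type $(3,\cdot)$ and equals $(3s+1)/4$ or $3s/4$; unless this element is $3$ or $4$ (the two cases that abort the walk), its successor begins a run of elements of type $(6,\cdot)$, after which comes one element of type $(5,\cdot)$, and then an element of type $(4,\cdot)$ that starts the \emph{next} segment. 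Thus a non-terminal segment has type-pattern $(4,\cdot),(3,\cdot),(6,\cdot),\dots,(6,\cdot),(5,\cdot)$ and ends at its unique element of type $(5,\cdot)$.

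The only free parameter is the length of the $(6,\cdot)$-run: it is entered at $(6,b\bmod 6)$ where $b$ is the $(3,\cdot)$-element, and $b\in\{(3s+1)/4,\,3s/4\}$ forces $b\equiv 0$ or $1\pmod 3$, so the entry is one of $(6,0),(6,1),(6,3),(6,4)$ and the run has length $1,1,3,4$ respectively (length $2$ is impossible). Splitting on $s\bmod 8$, which determines $b\bmod 6$, I would obtain exactly four non-terminal shapes, each with its ``next start'':
\begin{itemize}
\item $s=8p+1$, $p\ge1$: $(8p+1,\,6p+1,\,6p+1,\,5p+1)$, next start $4p+1=(s+1)/2$;
\item $s=8p+5$, $p\ge1$: $(8p+5,\,6p+4,\,6p+4,\,6p+3,\,6p+2,\,6p+1,\,5p+1)$, next start $4p+1<(s+1)/2$;
\item $s=8p$, $p\ge1$: $(8p,\,6p,\,6p,\,5p)$, next start $4p=s/2$;
\item $s=8p+4$, $p\ge1$: $(8p+4,\,6p+3,\,6p+3,\,6p+2,\,6p+1,\,5p+1)$, next start $4p+1<(s+1)/2$;
\end{itemize}
and the remaining two possibilities are the terminal segments $(4,3)$ (from $s=4$) and $(5,4,3)$ (from $s=5$), which are exactly the shapes that contain the element $3$. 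Since $3$ is the last element of $S(k)$, the last segment is one of these two and every other segment is one of the four shapes above.

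Every ``local'' assertion is now visible: a segment has at most $7$ elements (attained by the second shape) and a non-terminal one has at least $4$; a non-terminal segment has exactly one element of each of the types $(4,\cdot),(3,\cdot),(5,\cdot)$ and between one and four of type $(6,\cdot)$ (the two tiny terminal shapes are checked by hand); and, reading off the ``next start'' entries, if $t$ is the first element of a non-terminal segment then the next one starts at an integer $\le (t+1)/2$, hence at an integer $\le\lfloor (t+1)/2\rfloor=\lceil t/2\rceil$. For the ``global'' assertions: the first segment starts at $k$, $k-1$ or $k-2$ according to $k\bmod 4$, so $t_1\le k$; then $t_{r+1}\le\lceil t_r/2\rceil$ together with the identity $\lceil\lceil x/2\rceil/2\rceil=\lceil x/4\rceil$ give, by induction on $r$, that the $r$-th segment starts at an element $\le\lceil k/2^{r-1}\rceil$. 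If $R$ is the number of segments, the last one starts at $4$ or $5$, so $4\le\lceil k/2^{R-1}\rceil$, forcing $k/2^{R-1}>3$, i.e.\ $k>3\cdot 2^{R-1}\ge 2^{R}$, and hence $R\le\lfloor\log_2 k\rfloor$. Finally, $N(k)$ is at most the $2$ elements before the first segment, plus the first $R-1$ segments (size $\le 7$ each), plus the last segment (size $\le 3$), so $N(k)\le 2+7(R-1)+3=7R-2\le 7\lfloor\log_2 k\rfloor-2<7\log_2 k$.

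The only step that requires real effort is the case analysis establishing the four shapes: it is elementary but has to be done with care, since one must split into the residue sub-cases that decide where the $(6,\cdot)$-run is entered and then check in each case that the resulting ``next start'' has precisely the form needed to drive the halving recursion; the two small terminal segments must also be inspected separately. Everything after that is bookkeeping together with the elementary ceiling inequalities above.
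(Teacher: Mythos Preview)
Your proof is correct and follows essentially the same approach as the paper's: a case analysis on the starting element of a segment, an explicit computation of the resulting segment shape, and verification that the next segment's head is at most $(t+1)/2$. Your parametrization by $s\bmod 8$ is a bit cleaner than the paper's split into ``size-$4$ vs.\ size-$6$/$7$'' sub-cases, and your explicit handling of the two terminal segments and the final $N(k)$ count is more careful than the paper's ``with room to spare'' remark, but the substance of the argument is the same.
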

\begin{proof}
	The claims regarding the sizes of segments and the types present in each segment follow directly from Table
	\ref{table:1}.
	Suppose now that $t$ is the first element of some segment. Then the type of $t$ is either $(4,1)$
	or $(4,0)$. Suppose first that it is of type $(4,0)$. If the size of the segment is $4$
	then the elements of the segment are precisely $\{t,3t/4,3t/4,5t/8\}$ and the next element, starting the next segment, is also of type $(4,0)$ and is $t/2$. Otherwise, the size of the segment must be $6$,
	the elements must be $\{t,3t/4,3t/4,3t/4-1,3t/4-2,(5/6)(3t/4-2)\}$, and the next element, starting the next segment, is of type $(4,1)$ and is $(4/5)(5/6)(3t/4-2) < t/2$.
	Suppose next that $t$ is of type $(4,1)$. If the size of the segment is $4$ then the elements of the segment are $\{t,(3t+1)/4,(3t+1)/4,(5t+3)/8\}$ and the next element, starting the next segment, is of type $(4,1)$ and is $(t+1)/2$. Otherwise, the size of the segment is $7$ and the elements of the segment are
	$\{t,(3t+1)/4,(3t+1)/4,(3t-3)/4,(3t-7)/4,(3t-11)/4,5t/8-51/24\}$  and the next element, starting the next segment, is of type $(4,1)$ and is $(t-3)/2 < t/2$.
	It then follows that the first element of the $r$'th segment is at least $\lceil k/2^{r-1} \rceil$, that the number of segments is at most $\lfloor \log_2 k\rfloor$ and (with room to spare) that $N(k) \le 7 \log_2 k$.
\end{proof}

Ending this section, we require a definition that associates a sequence of minors of $G \in \C$ with the elements of the 4-3-6-5 sequence $S(k)$. The goal is to facilitate the detection of simple $k$ cycles using
contractions of that cycle and $(h,r)$-cyclic colorings of the contracted cycles where $(h,r)$ corresponds to the types of the elements of $S(k)$.
\begin{definition}[$S(k)$ minor sequence]\label{d:minor-sequence}
Let $G \in \C$, let $k \ge 4$ be an integer and suppose that $N(k)=s$.
A sequence of vertex-colored graphs $G_1,\ldots,G_s$ is called an {\em $S(k)$ minor sequence of $G$}
if the following holds:
\begin{enumerate}
	\item
	$G_1=G$ and for all $1 \le i < s$, $G_{i+1}$ is a minor of $G_i$.
	\item
	For all $1 \le i \le s$, if $S(k,i)$ is of type $(h_i,r_i)$ then the vertex coloring of $G_i$ is $c_i: V(G_i) \rightarrow Z_{h_i+r_i}$.
	\item
	For all $1 \le i < s$, if $G_{i+1}$ has a simple cycle of size $S(k,i+1)$ whose coloring is
	$(h_{i+1},r_{i+1})$-cyclic, then $G_i$ has a simple cycle of size $S(k,i)$ whose coloring is
	$(h_i,r_i)$-cyclic.
\end{enumerate}
\end{definition}
Observe that for an $S(k)$ minor sequence of $G$, if it holds that $G_s$ contains a triangle
whose coloring is $(3,0)$-cyclic (namely, the colors on the triangle are $0,1,2$) then, in particular,
$G$ has a simple cycle of size $k$ (the converse, of course, does not follow from the definition). 

\section{The algorithm}

Our main result in this section is that there is a randomized algorithm such that:
\begin{enumerate}
	\item Given $G \in \C$ and integer $k \ge 4$, always constructs efficiently an $S(k)$ minor sequence of $G$.
	\item With small probability (depending only on $k$ and $d=d_\C$), if $G$ has a simple cycle of size $k$
then the last element of the sequence, $G_{N(k)}$, has a triangle that is colored $(3,0)$-cyclic.
\end{enumerate}
This, coupled with the fact that all triangles in a $d$-degenerate graph can be deterministically found
in $O(n)$ time, immediately gives the randomized algorithm claimed in Theorem \ref{t:1}.

The first lemma in this section describes the randomized algorithm satisfying item 1 above.
The lemma following that, proves the claim in item 2 above.

\begin{lemma}\label{l:main}
	Let $G \in \C$ be a graph with $n$ vertices and let $k \ge 4$ be an integer.
	There exists a randomized algorithm that constructs in $O(\log k)n$ worst case time an $S(k)$ minor sequence of $G$. This construction has the feature (proved separately in Lemma \ref{l:prob}) that if $G$ has a simple cycle of size $k$ then with probability at least $(252d)^{-k-O(\log k)}$, the last element of the sequence, $G_{N(k)}$, has a triangle that is colored $(3,0)$-cyclic.
\end{lemma}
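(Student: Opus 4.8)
The plan is to exhibit the randomized algorithm explicitly, verify that its output meets the three defining properties of an $S(k)$ minor sequence (Definition~\ref{d:minor-sequence}), and bound its running time; the probabilistic guarantee is decoupled and deferred to Lemma~\ref{l:prob}. By Lemma~\ref{l:prop} the sequence $S(k)$ has $s=N(k)=O(\log k)$ elements. The algorithm maintains a vertex-colored graph $G_i$: it sets $G_1:=G$ and colors it uniformly at random over $Z_{h_1+r_1}$, where $(h_1,r_1)$ is the type $(4,k\bmod 4)$ of $S(k,1)$; then for $i=1,\dots,s-1$ it builds $G_{i+1}$ from $G_i$ by the operation prescribed by the type of $S(k,i)$ and the transition in Table~\ref{table:1}. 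There are three kinds of operation. A \emph{color-refinement step} (the rows $(3,\cdot)\to(6,\cdot)$, used when the current element exceeds $4$) leaves the graph unchanged and recolors each vertex $v$ by picking $c_{i+1}(v)$ uniformly from $\rho^{-1}(c_i(v))$, where $\rho:Z_{h_{i+1}+r_{i+1}}\to Z_{h_i+r_i}$ is the fixed surjection carrying every $(h_{i+1},r_{i+1})$-cyclic coloring to an $(h_i,r_i)$-cyclic one (essentially reduction modulo $3$). A \emph{peel step} (the rows whose next element is $S(k,i)-1$) contracts a star forest between color classes $h+r-1$ and $0$, recoloring each blob $0$, thereby removing the top color. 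A \emph{period-contraction step} (the rows in which $h$ drops by one) contracts a star forest between color classes $h-1$ and $0$ and, if $r=1$, additionally one between color classes $h-1$ and $h$ (blobs recolored $h-1$), lowering the period by one. In the latter two kinds the star forest is produced as in \cite{AYZ-1995}: compute a $d$-degenerate ordering $\pi_i$ (Lemma~\ref{l:construct-pi}) and a random $d$-degenerate labeling (Definition~\ref{d:degenerate-labeling}); \emph{clean up} by deleting every edge whose endpoint colors cannot be consecutive on an $(h,r)$-cyclically colored cycle of the current length; pick a random winner-loser partition (Definition~\ref{d:wl}) of each color pair being merged; and in the \emph{winner-loser cleanup} retain, among the edges of that pair, only those directed from a winner to a later loser and carrying label~$1$. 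The key combinatorial point is that such surviving edges form a star forest: a $d$-degenerate labeling gives a winner's edges to later vertices distinct labels, so each winner keeps at most one, making every component a star centered at a loser.

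Property~1 is immediate, since each operation only deletes or contracts edges, so $G_{i+1}$ is a minor of $G_i$. Property~2 holds by construction: a color-refinement step produces colors in $Z_{h_{i+1}+r_{i+1}}$ by definition of $\rho$, and a peel or period-contraction step eliminates exactly the surplus color(s). Property~3 --- that a correctly colored cycle of $G_{i+1}$ forces one in $G_i$ --- is the heart of the matter. For a color-refinement step it is immediate, because $c_i=\rho\circ c_{i+1}$ and $\rho$ maps $(h_{i+1},r_{i+1})$-cyclic colorings onto $(h_i,r_i)$-cyclic ones, so the same cycle works. For a peel or period-contraction step one adapts the lifting argument of \cite{AYZ-1995}: after the cleanup, any color pair that becomes consecutive only as a result of the merge cannot be present in the pre-contraction graph, so each edge of the small cycle $C'$ joining such colors must have come from a contracted star; replacing every such blob by the two-vertex path (leaf, center) reinserts one vertex per blob and produces a cycle of $G_i$ of the prescribed larger length with an $(h_i,r_i)$-cyclic coloring. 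I expect this lifting to be the main obstacle: one must check, case by case over the thirteen types (organized along the segments of Lemma~\ref{l:prop}), that the expansion never short-circuits $C'$, which relies on a vertex of the top color having a unique forward label-$1$ edge and hence lying in one star as a leaf, and on the relevant color pairs being genuinely non-consecutive before the merge --- this last point failing precisely for $h=3$, which is the whole reason for the color-refinement detour.

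For the running time, note that $G_i$, being a minor of $G\in\C$, itself lies in $\C$ and is $d$-degenerate, so $|E(G_i)|\le d\,n$. Each step is a constant number of linear scans of $V(G_i)\cup E(G_i)$ --- computing $\pi_i$ via Lemma~\ref{l:construct-pi}, assigning random labels, the two cleanups, the winner-loser coin flips, finding and contracting the star forest, and recoloring --- hence runs in $O(d\,n)=O(n)$ time; with $s=N(k)=O(\log k)$ steps this gives $O(\log k)\,n$ worst-case time overall. The probability bound is established separately in Lemma~\ref{l:prob}.
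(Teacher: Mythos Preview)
Your approach is essentially the same as the paper's: build the minor sequence step by step, doing either a color refinement (for types $(3,\cdot)$ with $t>4$) or a star-forest contraction (for all other types), and verify the three defining properties of Definition~\ref{d:minor-sequence}. The paper organizes the contraction cases slightly differently --- it always merges a \emph{single} consecutive color pair $(j-1,j)$, where $j=h+r-1$ if $r\ge2$ and $j=h-1$ if $r\le1$, recolors the blobs with the ``buffer color'' $j-1$, and in the $r=1$ case closes the resulting gap by relabeling color $h$ to $h-1$ --- whereas you merge $(h+r-1,0)$ for a peel and $(h-1,0)$ for a period-contraction. For the peel and the $r=0$ period-contraction your alternative pair works just as well, since after cleanup the buffer color is not adjacent to the color on its far side, so the lifting goes through.

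There is, however, a genuine problem with your $r=1$ case. You propose merging \emph{two} pairs, $(h-1,0)$ and $(h-1,h)$, both involving color class $h-1$. This is not well-defined as stated: a vertex of color $h-1$ receives a winner/loser designation in each of the two partitions, the two star forests overlap, and the resulting coloring depends on the (unspecified) order of contraction. More seriously, the lifting argument breaks. In $G_{i+1}$ the unique vertex of $C'$ with $c_{i+1}$-color $h-1$ must be shown to arise from a contracted pair containing a $c_i$-color-$h$ vertex; but under your scheme a vertex of $c_{i+1}$-color $h-1$ could just as well be a $c_i$-color-$(h-1)$ singleton (one that happened to have no surviving edge to a color-$h$ vertex), since in $G_i''$ colors $h-2$ and $0$ are \emph{both} adjacent to $h-1$. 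Then the lifted cycle has no vertex of $c_i$-color $h$ and is not $(h,1)$-cyclic. The paper's single merge of $(h-2,h-1)$ followed by the relabel $h\to h-1$ avoids this: after cleanup, neither $0$ nor the old color $h$ is adjacent to the buffer color $h-2$, so every occurrence of color $h-2$ on $C'$ that borders a color-$0$ or color-$(h-1)$ vertex is \emph{forced} to be a nontrivial blob, and the expansion is unambiguous.
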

\begin{proof}
	We construct the required $S(k)$ minor sequence sequentially, starting with $G=G_1$ and ending at $G_{N(k)}$.
	Although our construction is a randomized one, it will {\em always} be an $S(k)$ minor sequence and
	its worst-case running time is $O(\log k)n$ as it will be clear how to generate $G_{i+1}$ and its coloring
	$c_{i+1}$ from $G_i$ and its coloring $c_i$ in $O(n)$ time.
	
	{\bf Initial step}.
	As each graph $G_i$ in the sequence that we construct should, in particular, have a vertex coloring $c_i: V(G_i) \rightarrow Z_{h_i+r_i}$ where $S(k,i)$ is of type $(h_i,r_i)$, we must first define $c_1$, the vertex coloring of $G=G_1$. Let $r = k \pmod 4$ and observe that $S(k,1)=k$ is of type $(4,r)$.
	Randomly color each vertex of $G$ with a color from $Z_{4+r}$.
	
	Assume that we have already constructed $G_1,\ldots,G_i$ such that all three properties in Definition \ref{d:minor-sequence} hold. We show how to construct $G_{i+1}$.
	Note that $G_i$ is vertex colored $c_i: V(G_i) \rightarrow Z_{h_i+r_i}$ where $S(k,i)$ is of type $(h_i,r_i)$.
	For notational convenience, let $t=S(k,i)$.
	The construction of $G_{i+1}$ and $c_{i+1}$ consists of several steps, performed sequentially.
	
	{\bf Cleaning step}.
	Remove from $G_i$ all edges that cannot appear in an $(h_i,r_i)$-coloring of a simple cycle of size $t$. Suppose $v \in V(G_i)$ has color $j \in Z_{h_i+r_i}$. More formally:
	
	In the directed case we proceed as follows.
	If $1 \le j \le h_i-2$ or $h_i \le j \le h_i+r_i-2$
	then we only keep out-edges incident with $v$ of the form $(v,u)$ if $c_i(u)=j+1$
	or in-edges of the form $(u,v)$ if $c_i(u)=j-1$.
	If $j=0$ we only keep $(v,u)$ if $c_i(u)=1$ and only keep $(u,v)$ if $c_i(u) \in \{h_i-1,h_i+r_i-1\}$.
	If $j=h_i-1$ we only keep $(v,u)$ if $c_i(u) \in \{0,h_i\}$ and only keep $(u,v)$ if $c_i(u)=h_i-2$.
	If $j=h_i+r_i-1$ we only keep $(v,u)$ if $c_i(u)=0$ and only keep $(u,v)$ if $c_i(u)=h_i+r_i-2$.
	
	In the undirected case we proceed as follows.
	If $1 \le j \le h_i-2$ or $h_i \le j \le h_i+r_i-2$ then we only keep edges $vu$ where $c_i(u) \in \{j+1,j-1\}$.
	If $j=0$ we only keep $vu$ where $c_i(u) \in \{1,h_i-1,h_i+r_i-1\}$.
	If $j=h_i-1$ we only keep $vu$ if $c_i(u) \in \{0,h_i-2,h_i\}$.
	If $j=h_i+r_i-1$ we keep $vu$ if $c_i(u) \in \{0,h_i+r_i-2\}$.
	
	Notice that the above procedure is well-defined even if $r_i=0$ (there is no color $h_i$ in that case). Also, as we only remove edges, the obtained graph after cleaning, denoted by $G_i'$,
	is a minor of $G_i$. Furthermore, if $G_i$ has a simple cycle of size $t$ whose coloring is
	$(h_i,r_i)$-cyclic, then this cycle also exists in $G_i'$ as all of its edges are retained. Finally, the cleaning step can clearly be performed in $O(|E(G_i)|) \le O(n)$ time.
	
	{\bf Degenerate labeling step}. Construct a $d$-degenerate ordering $\pi$ of $G_i'$ in $O(n)$ time
	using Lemma \ref{l:construct-pi}.
	For each $v \in V$ assign to all neighbors of $v$ (in the directed case, a neighbor may an in-neighbor or an out-neighbor) that appear in $\pi$ after $v$, distinct integers in $[d]$ to obtain a $d$-degenerate labeling $\ell$ of $G_i'$. The assignment is performed at random (and the random choices made for distinct $v$ are independent). Observe that for a particular neighbor $u$ of $v$ with $\pi(v) < \pi(u)$, the probability that $\ell(v,u)=1$ is at least $1/d$, where $\ell(v,u)$ denotes the label of the edge connecting $v$ and $u$.
	
	{\bf Winner-loser step}.
	This step is done unless $t > 4$ and $(h_i,r_i) \in \{(3,0),(3,1)\}$.
	If $r_i \ge 2$ then let $j=h_i+r_i-1$ and if $r_i \le 1$ then let $j=h_i-1$.
	One exception: if $t=4$ and $(h_i,r_i)=(3,1)$ then let $j=3$.
	We call $j-1$ the {\em buffer color}.
	Consider the set $U = c_i^{-1}(j-1) \cup c_i^{-1}(j)$ of vertices of $G_i'$ whose color is either $j-1$ or $j$.
	For each $v \in U$ flip a fair coin to determine if it is a winner or a loser and obtain a winner-loser partition $(W,L)$ of $U$. This step is done in $O(n)$ time. Table \ref{table:2} designates the
	vertices on $(h,r)$-cyclic simple cycles of size $t$ that are colored with the buffer color.

\begin{table}[h!]
	\begin{center}
		\begin{tabular}{|c|c|}
			\hline 
			current type is $(h,r)$ & structure of $(h,r)$-cyclic coloring\\ 
			\hline 
			$(4,3)$ &  $0123 \cdots 0123 4\raisebox{.5pt}{\textcircled{\raisebox{-1.5pt} {5}}}6$ \\ 
			\hline 
			$(4,2)$ & $0123 \cdots 0123 \raisebox{.5pt}{\textcircled{\raisebox{-1.5pt} {4}}}5$ \\ 
			\hline 
			$(4,1)$ & $01\raisebox{.5pt}{\textcircled{\raisebox{-1.5pt} {2}}}3 \cdots 01\raisebox{.5pt}{\textcircled{\raisebox{-1.5pt} {2}}}3 4$ \\ 
			\hline 
			$(4,0)$ & $01\raisebox{.5pt}{\textcircled{\raisebox{-1.5pt} {2}}}3 \cdots 01\raisebox{.5pt}{\textcircled{\raisebox{-1.5pt} {2}}}3$ \\ 
			\hline 
			$(6,4)$ & $012345 \cdots 012345 67\raisebox{.5pt}{\textcircled{\raisebox{-1.5pt} {8}}}9$ \\ 
			\hline 
			$(6,3)$ & $012345 \cdots 012345 6\raisebox{.5pt}{\textcircled{\raisebox{-1.5pt} {7}}}8$ \\ 
			\hline 
			$(6,2)$ & $012345 \cdots 012345 \raisebox{.5pt}{\textcircled{\raisebox{-1.5pt} {6}}}7$ \\ 
			\hline 
			$(6,1)$ & $0123\raisebox{.5pt}{\textcircled{\raisebox{-1.5pt} {4}}}5 \cdots 0123\raisebox{.5pt}{\textcircled{\raisebox{-1.5pt} {4}}}5 6$ \\ 
			\hline 
			$(6,0)$ & $0123\raisebox{.5pt}{\textcircled{\raisebox{-1.5pt} {4}}}5 \cdots 0123\raisebox{.5pt}{\textcircled{\raisebox{-1.5pt} {4}}}5$ \\ 
			\hline 
			$(5,1)$ & $012\raisebox{.5pt}{\textcircled{\raisebox{-1.5pt} {3}}}4 \cdots 012\raisebox{.5pt}{\textcircled{\raisebox{-1.5pt} {3}}}4 5$ \\ 
			\hline 
			$(5,0)$ & $012\raisebox{.5pt}{\textcircled{\raisebox{-1.5pt} {3}}}4 \cdots 012\raisebox{.5pt}{\textcircled{\raisebox{-1.5pt} {3}}}4$ \\ 
			\hline 
			$(3,1)$ $t=4$ & $01\raisebox{.5pt}{\textcircled{\raisebox{-1.5pt} {2}}}3 $ \\ 
			\hline 
		\end{tabular}
	\end{center}
	\caption{An $(h,r)$-cyclic coloring of a size $t$ simple cycle where $(h,r)$ is a relevant type for the winner-loser step. Cycle vertices colored with the buffer color are circled.}\label{table:2}
\end{table} 

	{\bf Winner-loser cleanup step}.
	This step is done unless $t > 4$ and $(h_i,r_i) \in \{(3,0),(3,1)\}$.
	Remove from $G_i'$ all edges connecting two vertices is $W \cup L$ except for edges
	connecting a winner $v$ and a loser $u$, such that $\pi(v) < \pi(u)$ and $\ell(v,u)=1$.
	Denote the resulting graph by $G_i''$ and observe that $G_i''$ is a minor of $G_i$.
	This step is done in $O(n)$ time.

	{\bf Contraction step}.
	This step is done unless $t > 4$ and $(h_i,r_i) \in \{(3,0),(3,1)\}$.
	Consider the subgraph $G_i''[W \cup L]$ induced by the vertices of $W \cup L$. Then by the winner-loser cleanup
	step, this subgraph is a forest of rooted stars. Indeed, in this subgraph, each winner is incident with at most one edge (all edges of this subgraph have label $1$), all winners form an independent set and all losers form an independent set. In fact, in each such star which is not a singleton, the root is a loser and all leaves are winners.
	Singleton stars may be formed by a single isolated winner or a single isolated loser in $G_i''[W \cup L]$.
	Now, contract each star in $G_i''[W \cup L]$ to a single vertex giving the unified vertex the buffer color $j-1$
	(singleton stars that had color $j$ also receive color $j-1$). This defines the new graph $G_{i+1}$ which is a minor of $G_i$ and hence $G_{i+1} \in \C$.
	
	Notice that in $G_{i+1}$ no vertex has color $j$ anymore.
	Observe that in the case where $r_i=1$ the coloring of $G_{i+1}$ contains a gap.
	There are no vertices colored
	$h_i-1=j$ but there are still vertices colored $h_i$.
	For example, consider the case $(h_i,r_i)=(4,1)$. Then $j=3$ and the buffer color is $2$.
	Then vertices colored $2$ and $3$ spanned vertex-disjoint stars and were contracted to unified vertices having color $2$. But there are still vertices with color $4=h_i$ in $G_{i+1}$.
	To close this gap, just rename color $h_i$ to
	color $h_i-1$. Hence, the new coloring is $c_{i+1}: V(G_{i+1}) \rightarrow Z_{h_i+r_i-1}$.
	It is immediate to check Table \ref{table:1} that $h_{i+1}+r_{i+1}=h_i+r_i-1$.
	Indeed, from Table \ref{table:1} we see that the only cases where this does not hold are if $t > 4$ and
	$(h_i,r_i) \in \{(3,0),(3,1)\}$.
	
	Having defined $G_{i+1}$ and $c_{i+1}$ we have to also prove that the third condition
	of Definition \ref{d:minor-sequence} is satisfied.
	Suppose that $G_{i+1}$ has a simple cycle of size $S(k,i+1)$ whose coloring is
	$(h_{i+1},r_{i+1})$-cyclic. Let this cycle be $C=(v_0,v_1,\ldots,v_{p-1})$ where $p=S(k,i+1)$.
	We must prove that $G_i$ has a simple cycle of size $t=S(k,i)$ whose coloring is
	$(h_i,r_i)$-cyclic. In fact, we prove that already $G_i''$ has the required cycle
	and recall that $G_i''$ is a subgraph of $G_i$.
	
	We will prove this in the undirected setting (the proof for the directed setting is identical, just the notation changes from $uv$ to $(u,v)$).
	There are four cases to consider. First assume that $(h_i,r_i)$ is such that $r_i \ge 2$.
	Then we have that $(h_{i+1},r_{i+1})=(h_i,r_i-1)$, $p=t-1$, and $j-1=h_i+r_i-2$.
	The color of $v_{p-1}$ in $c_{i+1}$ is therefore $c_{i+1}(v_{p-1})=h_{i+1}+r_{i+1}-1=j-1$
	while $c_{i+1}(v_0)=0$. But observe that $v_{p-1}$ is adjacent to $v_0$ in $C$
	and that in $G_i''$ no edge colored $0$ is adjacent to a vertex colored $j-1$. So it must be that the
	star in $G_i''[W \cup L]$ that was contracted to $v_{p-1}$ contained two adjacent vertices, call them $x,y$ such that $c_i(x)=j-1$,
	$c_i(y)=j$, and $v_{p-2}x, xy, yv_0$ are all edges of $G_i''$. Notice also that $x$ and $y$ are not
	equal to any other vertex on the cycle, as the stars in $G_i''[W \cup L]$ are pairwise disjoint. 
	Hence, the cycle $C'=(v_0,v_1,\ldots,v_{p-2},x,y)$ is a simple cycle of size $p+1=t$ in $G_i''$
	and is also $(h_i,r_i)$-cyclic colored by the coloring $c_i$.
	
	Assume next that $(h_i,r_i)$ is such that $r_i=0$ (so this is possible if $h_i \in \{4,5,6\}$).
	Then we have that $(h_{i+1},r_{i+1})=(h_i-1,0)$, $p=(h_i-1)t/h_i$, and $j-1=h_i-2=h_{i+1}-1$.
	The color of all the vertices $v_w$ of $C$ where $w \pmod {h_{i+1}}=j-1$ is $c_{i+1}(v_w)=j-1$.
	Notice that there are $p/h_{i+1}$ such vertices $v_w$.
	Each such vertex $v_w$ is adjacent in $C$ to
	a vertex whose color is $0$ in the coloring $c_{i+1}$.
	But in $G_i''$ no edge colored $0$ is adjacent to a vertex colored $j-1$.
	So it must be that the star in $G_i''[W \cup L]$ that was contracted to $v_w$ contained two adjacent vertices, call them $x_w,y_w$ such that $c_i(x_w)=j-1$, $c_i(y_w)=j$, and $v_{w-1}x_w, x_wy_w, y_wv_{w+1}$
	are all edges of $G_i''$ (in the case of $w=p-1$ just define $v_{w+1}=v_0$).
	Notice also that $x_w$ and $y_w$ for any plausible $w$ are not
	equal to any other vertex on the cycle, as the stars in $G_i''[W \cup L]$ are pairwise disjoint.
	Hence, the cycle $C'=(v_0,\cdots,v_{h_i-3},x_{h_i-2},y_{h_i-2},v_{h_i-1},v_{h_i}, \cdots, v_{p-2}  
	x_{p-1},y_{p-1})$ is a simple cycle of size $p+p/h_{i+1}=t$ in $G_i''$
	and is also $(h_i,0)$-cyclic colored by the coloring $c_i$.
	
	Assume next that $(h_i,r_i)$ is such that $r_i=1$ and $h_i \in \{4,5,6\}$.
	Then we have that $(h_{i+1},r_{i+1})=(h_i-1,1)$, $p=((h_i-1)t+1)/h_i$, and $j-1=h_i-2=h_{i+1}-1$.
	The color of all the vertices $v_w$ of $C$ where $w \pmod {h_{i+1}}=j-1$ is $c_{i+1}(v_w)=j-1$.
	Notice that there are $(p-1)/h_{i+1}$ such vertices $v_w$. Each such vertex $v_w$ is either adjacent in $C$ to
	a vertex whose color is $0$ in the coloring $c_{i+1}$ or, for the next to last vertex $v_{p-2}$ (which is also
	of the form $v_w$ since $p-2 \pmod {h_{i+1}} = j-1$), it is adjacent in $C$ to $v_{p-1}$ whose color in $c_{i+1}$ is $h_{i+1}=h_i-1$, but recall that $v_{p-1}$ was just renamed to this color to close a color gap and originally $c_i(v_{p-1})=h_i$.
	But in $G_i''$ no edge colored $0$ is adjacent to a vertex colored $j-1$ and no vertex colored $h_i$ is
	adjacent to a vertex colored $j-1$.
	So it must be that the star in $G_i''[W \cup L]$ that was contracted to $v_w$ contained two adjacent vertices, call them $x_w,y_w$ such that $c_i(x_w)=j-1$, $c_i(y_w)=j$, and $v_{w-1}x_w, x_wy_w, y_wv_{w+1}$
	are all edges of $G_i''$. 
	Notice also that $x_w$ and $y_w$ for any plausible $w$ are not
	equal to any other vertex on the cycle, as the stars in $G_i''[W \cup L]$ are pairwise disjoint.
	Hence, the cycle $C'=(v_0,\cdots,v_{h_i-3},x_{h_i-2},y_{h_i-2},v_{h_i-1},v_{h_i}, \cdots, v_{p-3}  
	x_{p-2},y_{p-2},v_{p-1})$ is a simple cycle of size $p+(p-1)/h_{i+1}=t$ in $G_i''$
	and is also $(h_i,1)$-cyclic colored by the coloring $c_i$.
	For an illustrative example of this case see Figure \ref{f:contraction}.

\begin{figure}
	\includegraphics[scale=0.6,trim=46 350 134 20, clip]{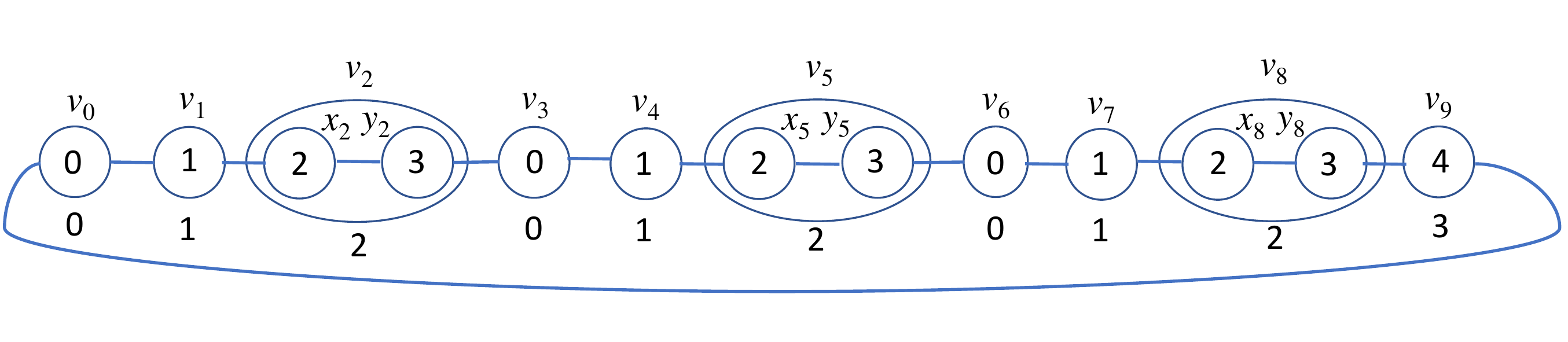}
	\caption{A simple cycle $C=(v_0,\ldots,v_9)$ of size $p=10$ in $G_{i+1}$ which is $(3,1)$-cyclic colored. The colors of $c_{i+1}$ are indicated below the vertices. It corresponds to a simple cycle of size $t=13$ in
	$G_i''$. Vertices colored $j-1=2$ in $G_{i+1}$ are results of contracted stars in $G_i''[W \cup L]$. Each such star contains two vertices $x_w,y_w$ (in this case $w \in \{2,5,8\}$) such that $c_i(x_w)=2$, $c_i(y_w)=3$
	and $v_{w-1}x_w, x_wy_w, y_wv_{w+1}$ are all edges of $G_i''$. Vertex $v_9$ has been recolored from $4$ in $c_i$ to $3$ in $c_{i+1}$ to close a color gap. The colors of $c_i$ are indicated inside the circles. The cycle in $G_i''$ is $(4,1)$-cyclic.}
	\label{f:contraction}
\end{figure}
	
	Finally consider the case $(h_i,r_i)=(3,1)$ and $t=4$. Observe that this case is identical to the case
	$(4,0)$ since for $t=4$, a $(3,1)$-cyclic coloring is identical to a $(4,0)$-cyclic coloring.
	We have already handled the case $(h_i,r_i)=(4,0)$ above.

\begin{algorithm}
	\caption{Computing an $S(k)$ minor sequence}\label{a:alg}
	\begin{algorithmic}[1]
		\Procedure{ProduceMinorSequence}{$G,k$}
		\State $G_1=G$ and $c_1: V(G_1) \rightarrow Z_{4+r}$ where $r = k \pmod 4$ constructed by the initial step.
		\For{$i=1,\ldots,N(k)-1$}
		\State $t \gets S(k,i)$
		\State Perform cleaning step on $G_i$ to obtain $G'_i$
		\State Perform degenerate labeling step on $G_i'$ to obtain a $d$-degenerate ordering and labeling
		\If{$t > 4 ~{\rm and} ~(h_i,r_i) \in \{(3,0,(3,1)\}$}
		\State $G_{i+1}=G_i$ and perform color refinement step to obtain $c_{i+1}$
		\Else
		\State Perform winner-loser step on $G'_i$ and $c_i$ to define a winner-loser partition $(W,L)$
		\State Perform winner-loser cleanup step on $G'_i$ and $W \cup L$ to obtain $G_i''$.
		\State Perform contraction step in $G_i''[W \cup L]$ to obtain $G_{i+1}$ and $c_{i+1}$.
		\EndIf
		\EndFor
		\EndProcedure
	\end{algorithmic}
\end{algorithm}

	{\bf Color refinement step}.
	This step is done only if $t > 4$ and $(h_i,r_i) \in \{(3,0),(3,1)\}$.
	In the case of type $(3,0)$ we have that $c_i : V(G_i) \rightarrow \{0,1,2\}$
	and in the case of type $(3,1)$ we have that $c_i : V(G_i) \rightarrow \{0,1,2,3\}$.
	There are four cases to consider, where in each case we do a {\em color refinement}:
	
	\noindent
	First case: $(h_i,r_i)=(3,0)$ and $t \pmod 6 = 0$.
	Each vertex of color $j \in \{0,1,2\}$ will keep its color with probability $1/2$ and change its color
	to $j+3$ with probability $1/2$.
	
	\noindent
	Second case: $(h_i,r_i)=(3,0)$ and $t \pmod 6 = 3$.
	Each vertex of color $j \in \{0,1,2\}$ will keep its color with probability $1/3$, change its color
	to $j+3$ with probability $1/3$, and change its color to $j+6$ with probability $1/3$.
	
	\noindent
	Third case: $(h_i,r_i)=(3,1)$ and $t \pmod 6 = 1$.
	Each vertex of color $j \in \{0,1,2\}$ will keep its color with probability $1/2$ and change its color
	to $j+3$ with probability $1/2$. Vertices of color $3$ will always change their color to $6$.
	
	\noindent
	Fourth case: $(h_i,r_i)=(3,1)$ and $t \pmod 6 = 4$.
	Each vertex of color $j \in \{0,1,2\}$ will keep its color with probability $1/3$, change its color
	to $j+3$ with probability $1/3$ and change its color to $j+6$ with probability $1/3$.
	Vertices of color $3$ will always change their color to $9$.
	
	\noindent
	This defines the coloring $c_{i+1}$. The graph $G_{i+1}$ will be the same as $G_i$.
	It is immediate to verify that the number of colors used in $c_{i+1}$ is $6+(t\pmod 6)$,
	so the first two conditions in Definition \ref{d:minor-sequence} hold.
	It remains to prove the third condition.
	Suppose that $G_{i+1}$ (namely, $G_i$) has a simple cycle of size $S(k,i+1)=S(k,i)=t$ whose coloring
	under $c_{i+1}$ is
	$(h_{i+1},r_{i+1})$-cyclic. Let this cycle be $C=(v_0,v_1,\ldots,v_{t-1})$.
	We must prove that the coloring of $C$ under $c_i$ is $(h_i,r_i)$-cyclic.
	Again, there are four cases to consider as in the previous paragraph.
	If $(h_i,r_i)=(3,0)$ and $t \pmod 6 = 0$ then $(h_{i+1},r_{i+1})=(6,0)$.
	So $c_{i+1}(v_w)= w \pmod 6$ for $w=0,\ldots,t-1$. But by the definition of the color refinement that
	we have done in this case, we have that $c_i(v_w)= w \pmod 3$ so $C$ under $c_i$ is $(3,0)$-cyclic.
	If $(h_i,r_i)=(3,0)$ and $t \pmod 6 = 3$ then $(h_{i+1},r_{i+1})=(6,3)$.
	So $c_{i+1}(v_w)= w \pmod 6$ for $w=0,\ldots,t-4$, $c_{i+1}(v_{t-3})=6$, $c_{i+1}(v_{t-2})=7$, 
	$c_{i+1}(v_{t-1})=8$. But by the definition of the color refinement that
	we have done in this case, we have that $c_i(v_w)= w \pmod 3$ for $w=0,\ldots,t-1$ so $C$ under $c_i$ is $(3,0)$-cyclic.
	If $(h_i,r_i)=(3,1)$ and $t \pmod 6 = 1$ then $(h_{i+1},r_{i+1})=(6,1)$.
	So $c_{i+1}(v_w)= w \pmod 6$ for $w=0,\ldots,t-2$  and $c_{i+1}(v_{t-1})=6$.
	But by the definition of the color refinement that
	we have done in this case, we have that $c_i(v_w)= w \pmod 3$ for $w=0,\ldots,t-2$ and $c_i(v_{t-1})=3$,
    so $C$ under $c_i$ is $(3,1)$-cyclic.
    Finally, if $(h_i,r_i)=(3,1)$ and $t \pmod 6 = 4$ then $(h_{i+1},r_{i+1})=(6,4)$.
    So $c_{i+1}(v_w)= w \pmod 6$ for $w=0,\ldots,t-5$, $c_{i+1}(v_{t-4})=6$, $c_{i+1}(v_{t-3})=7$, 
    $c_{i+1}(v_{t-2})=8$, $c_{i+1}(v_{t-1})=9$. But by the definition of the color refinement that
    we have done in this case, we have that $c_i(v_w)= w \pmod 3$ for $w=0,\ldots,t-2$ and $c_i(v_{t-1})=3$,
	so $C$ under $c_i$ is $(3,1)$-cyclic.
\end{proof}

The pseudocode of the algorithm defined by Lemma \ref{l:main} is given in Algorithm \ref{a:alg}.
Also observe that if a $(3,0)$-cyclic triangle exists in $G_{N(k)}$ then not only can we infer that $G$ has a simple cycle of size $k$, we can also retrace it explicitly. Indeed, the only thing needed for retracing is to mark
each edge of a contracted star in $G_i''[W \cup L]$ with the winner-loser pair that gave rise to that edge.

\begin{lemma}\label{l:prob}
	Let $G \in \C$, and suppose $G$ has a simple cycle of size $k \ge 4$.
	Then with probability at least $(252d)^{-k-O(\log k)}$ the algorithm of Lemma \ref{l:main} that constructs an $S(k)$ minor sequence of $G$ has the property that the last graph in the sequence, $G_{N(k)}$, has a triangle that is colored $(3,0)$-cyclic.
\end{lemma}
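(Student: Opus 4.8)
The plan is to fix one simple $k$-cycle $C$ in $G$ and follow a single distinguished ``descendant'' of it through the $N(k)-1$ stages of the construction of Lemma~\ref{l:main}. I would introduce success events $\mathcal{E}_0,\dots,\mathcal{E}_{N(k)-1}$: the event $\mathcal{E}_0$ concerns the random colouring of the initial step, and for $i\ge 1$ the event $\mathcal{E}_i$ concerns only the fresh randomness used at stage $i$ (the random $d$-degenerate labelling of $G_i'$ together with, as the case may be, the winner--loser coin flips or the colour-refinement coin flips of that stage). These pools of randomness are mutually independent, so $\Pr[\bigcap_i\mathcal{E}_i]=\prod_i p_i$ with $p_i:=\Pr[\mathcal{E}_i\mid\mathcal{E}_0\cap\dots\cap\mathcal{E}_{i-1}]$, and the events will be built so that: (a) on $\bigcap_i\mathcal{E}_i$ there is, for every $i$, an explicit simple cycle $C^{(i)}$ of size $S(k,i)$ in $G_i$ that is $(h_i,r_i)$-cyclic under $c_i$, with $C^{(1)}=C$ and $C^{(i+1)}$ obtained from $C^{(i)}$ by the operations of stage $i$; and (b) each $p_i$ has a lower bound depending only on $S(k,i)$ and $d$, uniform over the possible $C^{(i)}$. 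Since $S(k,N(k))=3$ and a $(3,0)$-cyclic cycle on three vertices is just a triangle with colour classes $\{0,1,2\}$, clause (a) at $i=N(k)$ is exactly the conclusion of the lemma; multiplying out the bounds of (b) will give the stated probability.

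Concretely, $\mathcal{E}_0$ is the event that the initial colouring restricted to $C$ --- with a fixed starting vertex and orientation --- realizes the $(4,r_1)$-cyclic pattern of Definition~\ref{d:hr-coloring}; since the $k$ vertices are coloured independently and uniformly from $Z_{4+r_1}$ with $4+r_1\le 7$, this gives $p_0\ge 7^{-k}$. If stage $i$ is a colour-refinement stage (type $(3,0)$ or $(3,1)$ with $t:=S(k,i)>4$), then $G_{i+1}=G_i$ and I let $C^{(i+1)}:=C^{(i)}$; here $\mathcal{E}_i$ is the event that every vertex of $C^{(i)}$ is recoloured to the value prescribed by the target $(h_{i+1},r_{i+1})$-cyclic pattern. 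Each of the $t$ vertices then faces a fair two- or three-way choice (vertices of colour $3$ are recoloured deterministically), so $p_i\ge 3^{-t}$, and after these outcomes $C^{(i+1)}$ is $(h_{i+1},r_{i+1})$-cyclic under $c_{i+1}$ --- this is the colour bookkeeping at the end of the proof of Lemma~\ref{l:main}, read in the opposite direction. Otherwise stage $i$ is a contraction stage with buffer colour $j$; on a $(h_i,r_i)$-cyclic cycle every vertex of colour $j-1$ is immediately followed by a vertex of colour $j$, so $C^{(i)}$ carries exactly $m_i:=S(k,i)-S(k,i+1)$ consecutive pairs $(x_1,y_1),\dots,(x_{m_i},y_{m_i})$ with $c_i(x_\ell)=j-1$, $c_i(y_\ell)=j$ --- precisely the pairs contracted in the corresponding case of the proof of Lemma~\ref{l:main}. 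Let $\mathcal{E}_i$ be the event that for each $\ell$ the random labelling gives $\ell(x_\ell,y_\ell)=1$ and the two winner--loser coins place exactly one of $x_\ell,y_\ell$ in $W$ and the other in $L$, with the member of $W$ preceding the member of $L$ in $\pi$. The $m_i$ edges and their $2m_i$ endpoints are all distinct and the relevant bits are independent, so $p_i\ge(\tfrac14\cdot\tfrac1d)^{m_i}=(4d)^{-m_i}$.

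The point that requires care --- and the main obstacle --- is to show that, conditioned on $\mathcal{E}_i$, the contraction step really does turn $C^{(i)}$ into a \emph{simple} cycle $C^{(i+1)}$ of size $S(k,i+1)$: one must rule out any accidental identification of cycle vertices beyond the intended pairs. After the cleaning step, every edge of $G_i'$ joining two vertices of $c_i^{-1}(j-1)\cup c_i^{-1}(j)$ in fact joins colour $j-1$ to colour $j$, so the stars of $G_i''[W\cup L]$ are bipartite with monochromatic leaf sets. The $d$-degenerate labelling property allows each vertex at most one label-$1$ edge to vertices after it in $\pi$; combined with the winner-before-loser requirement in the winner--loser cleanup, this forces every winner to be incident with at most one surviving edge in $G_i''[W\cup L]$. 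Hence once $\mathcal{E}_i$ guarantees that each $x_\ell y_\ell$ survives, the winner of the pair $\{x_\ell,y_\ell\}$ has it as its only surviving edge and therefore lies in no star but the one rooted at the loser of the pair; moreover, on $C^{(i)}$ the only cycle-neighbour of $x_\ell$, resp.\ of $y_\ell$, whose colour lies in $\{j-1,j\}$ is $y_\ell$, resp.\ $x_\ell$ (one checks from Definition~\ref{d:hr-coloring} that the other cycle-neighbour of $x_\ell$ has colour $j-2$). Thus the star absorbing $x_\ell$ and $y_\ell$ meets $C^{(i)}$ in no other vertex, distinct pairs lie in disjoint stars, and replacing each $\{x_\ell,y_\ell\}$ by its contracted vertex yields a simple cycle of size $S(k,i+1)$ in $G_{i+1}$ that is $(h_{i+1},r_{i+1})$-cyclic under $c_{i+1}$. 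This is exactly the construction in the proof of Lemma~\ref{l:main} run backwards.

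Finally, I multiply the bounds. The cycle length drops only at contraction stages, and there exactly by $m_i$, so $\sum_{\text{contractions}}m_i=k-3$. By Lemma~\ref{l:prop} a colour-refinement stage occurs at most once per segment --- at the segment's unique element of type $(3,\cdot)$, and only while that element exceeds $4$ --- so there are at most $\lfloor\log_2 k\rfloor$ of them, and the size at the refinement inside the $r$-th segment is at most the first element $\lceil k/2^{r-1}\rceil$ of that segment; hence $\sum_{\text{refinements}}S(k,i)\le\sum_{r\ge 1}\lceil k/2^{r-1}\rceil\le 2k+\log_2 k$. Therefore
\[
\Pr\Big[\bigcap_i\mathcal{E}_i\Big]\ \ge\ 7^{-k}\cdot(4d)^{-(k-3)}\cdot 3^{-(2k+\log_2 k)}\ \ge\ (7\cdot 4\cdot 9\cdot d)^{-k}\cdot 3^{-O(\log k)}\ \ge\ (252d)^{-k-O(\log k)},
\]
and on $\bigcap_i\mathcal{E}_i$ the graph $G_{N(k)}$ contains a $(3,0)$-cyclic triangle, as required. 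The directed case is identical, with each $uv$ replaced by $(u,v)$.
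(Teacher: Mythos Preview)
Your proof is correct and follows essentially the same strategy as the paper's: fix a $k$-cycle $C$, track its ``survival'' stage by stage, lower-bound the conditional success probability at each stage by $7^{-k}$ (initial colouring), $3^{-t}$ (refinement), or $(4d)^{-m_i}$ (contraction), and multiply. Your accounting is in fact slightly cleaner than the paper's: you go straight to the telescoping identity $\sum_{\text{contractions}} m_i = k-3$, whereas the paper first computes a per-segment product and only afterwards observes that the total number of $1/(4d)$ factors is at most $k$; both arrive at $(252d)^{-k-O(\log k)}$. One minor slip: you write ``buffer colour $j$'' but then (correctly, matching the paper) work with the pair of colours $j-1,j$ where $j-1$ is the buffer colour.
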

\begin{proof}
Let	$C=(v_0,\ldots,v_{k-1})$ denote the simple cycle of size $k$ assumed to exist in $G$.
Consider the coloring induced on $C$ by the coloring $c_1:V(G) \rightarrow Z_{4+r}$
constructed in the initial step of the algorithm. We lower-bound the probability that $C$ is $(4,r)$-cyclic.
By definition, this holds if $c_1(v_j)=j \pmod 4$ for~ $0 \le j < k-r$ and $c(v_{k-j})=4+r-j$ for $1 \le j \le r$
where recall that $r = k \pmod 4$ so $r \le 3$.
Hence, the probability that $C$ is $(4,r)$-cyclic under the coloring $c_1$ is at least $1/(4+r)^k \ge 7^{-k}$.
If this occurred, we say that $C$ {\em survived} in $G_1$.

Now suppose that $C$ survived in $G_i$, meaning in particular that $G_i$ has a simple cycle $C_i=(u_0,\ldots,u_{t-1})$ where $t=S(k,i)$ and whose coloring under $c_i$ is $(h_i,r_i)$-cyclic where the type
of $S(k,i)$ is $(h_i,r_i)$. Given that, we would like to lower bound the probability that $C$ survives also 
in $G_{i+1}$.

There are several cases to consider. Suppose first that $t > 4$ and $(h_i,r_i) \in \{(3,0),(3,1)\}$.
In this case, we want $C_i$ to be $(h_{i+1},r_{i+1})$-cyclic after the color refinement step that constructed
$c_{i+1}$ (recall that in this case $G_{i+1}=G_i$). Notice that in the color refinement step, vertices change their
color to another color with probability at least $1/3$. For example, in the case $(h_i,r_i)=(3,0)$ and $t \pmod 6 = 3$ vertices colored $j \in \{0,1,2\}$ change their color to one of $\{j,j+3,j+6\}$ each with probability $1/3$.
More accurately, we go over the four cases of the color refinement step.
If $(h_i,r_i)=(3,0)$ and $t \pmod 6=0$ then the probability that $C_i$ under $c_{i+1}$ is
$(h_{i+1},r_{i+1})=(6,0)$-cyclic is $(1/2)^t$.
If $(h_i,r_i)=(3,0)$ and $t \pmod 6=3$ then the probability that $C_i$ under $c_{i+1}$ is
$(h_{i+1},r_{i+1})=(6,3)$-cyclic is $(1/3)^t$.
If $(h_i,r_i)=(3,1)$ and $t \pmod 6=1$ then the probability that $C_i$ under $c_{i+1}$ is
$(h_{i+1},r_{i+1})=(6,1)$-cyclic is $(1/2)^{t-1}$.
If $(h_i,r_i)=(3,1)$ and $t \pmod 6=4$ then the probability that $C_i$ under $c_{i+1}$ is
$(h_{i+1},r_{i+1})=(6,4)$-cyclic is $(1/3)^{t-1}$.
In any case, with probability at least $(1/3)^t$, given that $C$ survived in $G_i$, $C$ also survived
in $G_{i+1}$.

Suppose next that we are not in a case ``$t > 4$ and $(h_i,r_i) \in \{(3,0),(3,1)\}$''.
What is the probability that $C$ survived after the winner-loser cleanup step?
This, in turn depends on the random choices made in the degenerate labeling step and the winner-loser step.
Again, there are several sub-cases to consider.

First assume that $(h_i,r_i)$ is such that $r_i \ge 2$. So the unique vertex on $C_i$ whose color
under $c_i$ is $j$ is vertex $u_{t-1}$ and the unique vertex on $C_i$ whose color
under $c_i$ is $j-1$ is vertex $u_{t-2}$. For $C$ to survive we must correctly guess which of
$u_{t-1}$ and $u_{t-2}$ is a winner and which is a loser, and hope that the random $d$-degenerate labeling
assigned label $1$ to the edge connecting them. Suppose $\pi(u_{t-1}) < \pi(u_{t-2})$. So for survival
we should guess that $u_{t-1}$ is a winner, $u_{t-2}$ is a loser, and $\ell(u_{t-1},u_{t-2})=1$.
This occurs with probability at least $(1/2)\cdot(1/2)\cdot(1/d)=1/(4d)$. Similarly, if $\pi(u_{t-2}) < \pi(u_{t-1})$
the probability for survival is at least $1/(4d)$. Indeed, once we have guessed correctly, the contraction
of the star in $G_i''[L \cup W]$ containing both $u_{t-2},u_{t-1}$ to a unified vertex $x$ would
create in $G_{i+1}$ a cycle $C_{i+1}=(u_0,\ldots,u_{t-3},x)$ which is colored $(h_i,r_i-1)=(h_{i+1},r_{i+1})$-cyclic under $c_{i+1}$, so $C$ survived in $G_{i+1}$.

Assume next that $(h_i,r_i)$ is such that $r_i=0$ (so this is possible if $h_i \in \{4,5,6\}$
but also the case $t=4$ and $(h_i,r_i)=(3,1)$ since this case is equivalent to $(4,0)$).
Then for every $w$ of the form $w \pmod {h_i}=h_i-1$, the vertices $u_w$ are colored with $j$ under $c_i$
and the vertices $u_{w-1}$ are colored with $j-1$ under $c_i$. In order for $C$ to survive, we would like
to correctly guess, for each such pair $u_{w-1},u_w$ the winner, the loser, and that the random label between them
is $1$. As in the previous paragraph, this occurs with probability at least $1/(4d)$ for each such pair, so the
probability that $C$ survived in $G_{i+1}$ is at least $(1/4d)^{t/h_i}$.
Indeed, once we have guessed correctly, the contraction
of the stars in $G_i''[L \cup W]$ containing both $u_{w-1},u_{w}$ for each of the $t/h_i$ plausible $w$
creates a unified vertex $x_w$ for each of them and hence there is a cycle
$C_{i+1}=(u_0,\cdots,u_{h_i-3},x_{h_i-1},u_{h_i},\cdots,u_{t-h_i},\cdots,u_{t-3},x_{t-1})$ 
which is colored $(h_i-1,0)=(h_{i+1},r_{i+1})$-cyclic under $c_{i+1}$.

Finally, the case that $(h_i,r_i)$ is such that $r_i=1$ and $h_i \in \{4,5,6\}$ is proved in exactly the same way as the previous one. Just observe that the number of plausible $w$ in this case is $(t-1)/h_i$
so the probability that $C$ survived in $G_{i+1}$ is at least $(1/4d)^{(t-1)/h_i}$.

Let us now multiply all of the lower bounds of the probabilities of survival in each iteration, to obtain
a lower bound for the survival probability of $C$ in the final graph $G_{N(k)}$, meaning that 
$G_{N(k)}$ has a triangle that is colored $(3,0)$-cyclic.
Let $p_i$ denote the probability of survival in $G_i$.
Then, summarizing what we have just proved:
\begin{enumerate}
\item[(i)] $p_1 \ge 7^{-k}$.
\item[(ii)] If $t=S(k,i) > 4$ is of type $(3,0)$ or $(3,1)$ then $p_i \ge (1/3)^t$.
\item[(iii)] Otherwise, if $t=S(k,i)$ is of type $(h_i,r_i)$ with $r_i \ge 2$ then $p_i \ge 1/(4d)$.
\item[(iv)] Otherwise, $p_i \ge (1/4d)^{t/h_i}$.
\end{enumerate}

The cases of types $(4,3)$ and $(4,2)$ are only possible at the beginning. Namely, if $k  \pmod 4 = 3$
then $S(k,1)$ is of type $(4,3)$ and $S(k,2)$ is of type $(4,2)$. If $k  \pmod 4 = 2$
then $S(k,1)$ is of type $(4,2)$. We never return to these types anymore.
So, the product of the $p_i$'s until the first time we reach the head of a segment (recall Lemma \ref{l:prop})
is either $p_1$ in the case where $k \pmod 4 \in \{0,1\}$, or $p_1/(4d)$ in the case $k \pmod 4 = 2$
or $p_1/(16d^2)$ in the case $k \pmod 4 = 3$.  We next compute the product of the $p_i$'s corresponding to the elements of some segment whose first element is $t$.
In every segment there is at most one element of type $(3,0)$ or $(3,1)$.
Hence the contribution of this element to the product of the $p_i$'s of the segment is
at least $(1/3)^t$. There is at most one element whose type is in $\{(6,1),(6,0)\}$,
at most one element whose type is in $\{(5,1),(5,0)\}$, and at most one element whose type is in $\{(4,1),(4,0)\}$.
Hence, their contribution to the product is at least $(1/4d)^{t/4}(1/4d)^{t/5}(1/4d)^{t/6}$
(we could have further optimized the exponent since, e.g., the element of type $(6,1)$ or $(6,0)$ is already at most $\lceil 3t/4 \rceil < t$ but we do not worry about optimizing the base of the exponent here).
Finally, there are at most three elements of type in $\{(6,4),(6,3)(6,2)\}$ so their contribution to the product is at least $(1/4d)^{3}$.
Overall, the product of the $p_i$'s of a segment whose first element is $t$ is at least
$$
\left(\frac{1}{3}\right)^t\left(\frac{1}{4d}\right)^{3+t/4+t/5+t/6}=
\frac{1}{64d^3}\left[\frac{1}{3}\left(\frac{1}{4d}\right)^{37/60}\right]^t
$$
But recall from Lemma \ref{l:prop} that the first element of the $r$'th segment is at most $\lceil k/2^{r-1} \rceil$
and that there are at most $\lfloor \log_2 k \rfloor$ segments. 
It follows that the probability that $C$ survived until the last element $G_{N(k)}$ is at least
\begin{align*}
& \frac{p_1}{16d^2} \prod_{r=1}^{\lfloor \log_2 k \rfloor}  \frac{1}{64d^3}\left[\frac{1}{3}\left(\frac{1}{4d}\right)^{37/60}\right]^{\lceil k/2^{r-1} \rceil}\\
&
\ge \frac{7^{-k}}{16d^2}\left(\frac{1}{64d^3}\right)^{\log_2 k}\left[\frac{1}{3}\left(\frac{1}{4d}\right)^{37/60}\right]^{\log_2 k}\left[\frac{1}{3}\left(\frac{1}{4d}\right)^{37/60}\right]^{2k} \\
& = \frac{1}{16d^2}\left(\frac{1}{64d^3}\right)^{\log_2 k}\left[\frac{1}{3}\left(\frac{1}{4d}\right)^{37/60}\right]^{\log_2 k}\left[\frac{1}{63}\left(\frac{1}{4d}\right)^{37/30}\right]^{k} \\
& = \left[\frac{1}{63}\left(\frac{1}{4d}\right)^{37/30}\right]^{k+O(\log k)}
\end{align*}
In fact, we can do a bit better since it is immediate from the proof of the lemma that each introduction of a term $1/(4d)$ in the probability expression corresponds to an edge contraction of the cycle $C$. As the overall number of contractions from a cycle of size $k$ until a cycle of size $3$ is reached at the final iteration
is less than $k$, the probability above can be  improved to at most
$$
\left[\frac{1}{63}\left(\frac{1}{4d}\right)\right]^{k+O(\log k)} = \left(\frac{1}{252d}\right)^{k+O(\log k)}\;.
$$
\end{proof}

\section{Derandomization}

It is not difficult to see that the number of ``random bits'' that we use throughout the algorithm is $O(k)$
and that the whole algorithm is encoded with a binary string of length $n$. Hence it is fairly standard to use
the derandomization method of ``almost $k$-wise independent random variables'' for our purposes.
In what follows we make this argument precise.

Consider the sequence $S(k)$ and recall that $S(k,i)$ is the $i$'th element in the sequence and
that its type is $(h_i,r_i)$.
In order to derandomize our algorithm, we first need to define certain vector-valued random variables
$M_0,\ldots,M_{N(k)-1}$, one for each but the last element of the sequence $S(k)$, and $M_0$ corresponding to the
initial step.
The length of each vector $M_i$ is $n$ (note: we do not know a priori how many vertices would be in each $G_i$ but we do know that there are never more than $n$).

We now define our sample space, namely the possible entries of each coordinate of each vector.
For $M_0$, each coordinate can be an element of $Z_{4+r}$ where $r = k \pmod 4$.
For $M_i$, consider the type $(h_i,r_i)$ of $t=S(k,i)$.
If $(h_i,r_i) \in \{(3,0),(3,1)\}$ and $t > 4$, then each coordinate can be one of $\{0,1,2\}$.
Otherwise, each coordinate is an element of $\{0,1\} \times [d]$.

An instantiation of the random variables $M_0,\ldots,M_{N(k)-1}$ exactly defines the behavior of our randomized algorithm, as follows. First, let us fix a labeling of the $n$ vertices with distinct integers from $[n]$.
We will use this labeling throughout in all graphs $G_i$, since if a star is contracted at some point
then the unified vertex can be labeled, say, by the smallest label of a vertex in the star.
Hence for every vertex $v$, and for every graph $G_i$ throughout the algorithm, the entry $M_i[v]$ is well defined
(it is just the coordinate of $M_i$ which equals the label of $v$ in $G_i$).

For the initial step, recall that we randomly color the vertices of $G$ with a coloring
$c_1: V(G) \rightarrow Z_{4+r}$. So, each vertex $v$ is colored by the color $M_0[v]$.

Now suppose we are at iteration $i$ where we have the graph $G_i$ and its coloring $c_i$.
We proceed as in Lemma \ref{l:main}. If $t=S(k,i) > 4$ and $(h_i,r_i) \in \{(3,0),(3,1)\}$
then we have to perform the color refinement step. Recall that in this step, every vertex changes its color to one of two or three possible other
colors. For example, in the case of type $(3,0)$ and $t \pmod 6 = 0$ a vertex colored $j \in \{0,1,2\}$ either keeps its color or changes its color to $j+3$. So, looking at $M_i[v]$, if $M_i[v]=0$ we do not change the color,
if $M_i[v] =1$ we change the color to $j+3$ and if $M_i[v]=2$ we can decide either way.
Otherwise, recall from Lemma \ref{l:main} that we choose for certain vertices (those colored with the
buffer color or those color with the next color after the buffer color) whether it is a winner or a loser.
Also, for every winner, we choose one of its at most $d$ incident edges connecting it to vertices appearing
after it in $\pi$ the label $1$. So, for each such vertex $v$ for which we need to decide winner/loser,
we examine $M_i[v]=(x,y) \in \{0,1\} \times [d]$. If $x=0$ it is a winner, if $x=1$ it is a loser.
If it is a winner we label the $y$'th edge connecting it to a vertex appearing after $v$ in $\pi$ with the label $1$ (the ordering of the neighbors appearing after $v$ in $\pi$ is set to be the order of the labels of these vertices). Notice that it can be that $y$ is larger than the number of vertices appearing after $v$ in $\pi$, in which case we don't label any edge incident with $v$ with the label $1$.
We have completely defined the execution path of the algorithm as a result of the values of
$M_0,\ldots,M_{N(k)-1}$. Stated otherwise, given $M_0,\ldots,M_{N(k)-1}$, the algorithm of Lemma \ref{l:main} is completely deterministic.

We would like to explicitly find a small set $\T$ of instantiations of the $M_0,\ldots,M_{N(k)-1}$ 
such that we are {\em guaranteed} that a simple cycle $C$ of size $k$ survives throughout all the iterations,
as in the proof of Lemma \ref{l:prob}. What do we then require from our set $\T$?
As for $M_0$, we require that every set of $k$ vertices (i.e. coordinates of $M_0$) will receive
any possible coloring in $Z_{4+r}$.
As for $M_i$ when $(h_i,r_i) \in \{(3,0),(3,1)\}$ and $t=S(k,i) > 4$, we would like every set of $t$
coordinates of $M_i$ to obtain all $3^t$ possible values of $\{0,1,2\}$.
As for the remaining $M_i$ we would like every set of $t$ coordinates of $M_i$ (note: this is more than needed,
if $r_i \ge 2$ then there are just two vertices on the surviving cycle that should be declared winners or losers
and if $r_i \in \{0,1\}$ there are at most $2t/h_i$ vertices on the surviving cycle that should be declared winners or losers) to obtain every possible value of  $\{0,1\} \times [d]$ (there are $(2d)^t$ such options).

It would be more convenient to view the $M_i$ as binary vectors. So, in $M_0$, only three bits are
enough to describe the entry $M_0[v] \in Z_{4+r}$ since $4+r \le 7$. So the length of $M_0$ is $3n$ bits.
For $M_i$ corresponding to types $(h_i,r_i) \in \{(3,0),(3,1)\}$ with $t \ge 4$, it suffices to use two bits
for each entry as the entries are in $\{0,1,2\}$. For the remaining $M_i$, they contain entries from
$\{0,1\} \times [d]$ so $1+ \lceil \log_2 d \rceil$ bits suffice for each coordinate.

So we would like our set $\T$ to have instantiations such that for every $3k$ bits from $M_0$, every $2t$ bits from $M_i$ corresponding to types
$\{(3,0),(3,1)\}$ and $t > 4$, and every $t(1+ \lceil \log_2 d \rceil)$ bits from the remaining $M_i$,
all possible choices are present. What is the total sum of the number of bits that we are considering?
By Lemma \ref{l:prop}, $t$ decreases by a half after each segment, so overall we are examining at most
$O(k\log d)=O(k)$ bit locations.
Viewing the $M_0,\ldots,M_{N(k)-1}$ as a consecutive sequence of binary vectors, its length is
$O(\log k n)$, so what we are looking for in $\T$ is a set of binary vectors of the same length
$N=O(\log k n)$ each,
such that for every choice of $\ell=O(k)$ bit locations, and for any choice of the $2^\ell$ values in these locations, there will be a vector in $\T$ which, when projected to these locations, yields these values.
In other words, we need a sequence $X_1,\ldots,X_N$  of random Boolean variables that are $(2^{-\ell},\ell)$-independent.
For this purpose, it suffices to use the well-known construction of Alon et. al. \cite{AGHP-1992}
(see also Naor and Naor \cite{NN-1993}).
In this construction the size of $\T$ is only $2^{O(\ell)}\log N$ and the time it takes to construct them is
only  $2^{O(\ell)}N\log N$. In our case, the size of $\T$ is therefore $2^{O(k)}\log n$ and the time to construct it is
$2^{O(k)}n\log n$. We have therefore shown how to derandomize our algorithm and obtain a worst-case running time of
	$2^{O(k)}n\log n$, as required. \qed

\section*{Acknowledgmet}

I thank the reviewers for their comments leading to an improved exposition of the paper.

\bibliographystyle{plain}

\bibliography{references}

\end{document}